\newcommand{\R}{{\mathbb{R}}}
\newcommand{\N}{{\mathbb{N}}}
\newcommand{\diff}{{\,{\mathrm{d}}}}
\newcommand{\p}{{\mathbb{P}}}
\newcommand{\E}{{\mathbb{E}}}
\newcommand{\eps}{{\varepsilon}}
\renewcommand{\L}{{\mathscr{L}}}
\newtheorem{kor}{Corollary}
\newtheorem{satz}{Theorem}
\newtheorem{lem}{Lemma}
\theoremstyle{remark}	%Styleänderung (Text aufrecht, ...)
\begin{document}

\title{Censored lifespans in a double-truncated sample: Maximum likelihood inference for the exponential distribution}
%\author{$\ast$}

\author{Fiete Sieg \& Anne-Marie Toparkus \& Rafael Wei{\ss}bach \\[2mm] \textit{\footnotesize{Chair of Statistics and Econometrics,}}   \\[-2mm]
	\textit{\footnotesize{  Faculty for Economic and Social Sciences,}} \\[-2mm]
	\textit{\footnotesize{University of Rostock}} \\
}
\date{ }
\maketitle

\renewcommand{\baselinestretch}{1.3}\normalsize

\begin{abstract}
The analysis of a truncated sample can be hindered by censoring. Survival information may be lost to follow-up or the birthdate may be missing. The data can still be modeled as a truncated point process and it is close to a Poisson process, in the Hellinger distance, as long as the sample is small relative to the population. We assume an exponential distribution for the lifespan, derive the likelihood and profile out the unobservable sample size.  Identification of the exponential parameter is shown, together with consistency and asymptotic normality of its M-estimator. Even though the estimator sequence is indexed in the sample size, both the point estimator and the standard error are observable. Enterprise lifespans in Germany constitute our example.        
 \\[2mm]
\noindent \textit{Keywords:} Truncation, censoring, point process, maximum likelihood, asymptotic inference
\end{abstract}

\section{Introduction and sampling design}
\label{sec1}
%% Labels are used to cross-reference an item using \ref command.
In a simple random sample, each individual of the population has the same and parameter-independent selection probability. A left-truncated panel causes two probabilities, individuals who had died before the study have a probability of zero and those remaining have a parameter-dependent probability. The design has prominent applications in economics \citep{heckman1976} and medicine \citep{kalblawl1989} and the statistical analysis has a long history \citep{turnbull1976,woodroofe1985,And0,keiding1990,stute1993b,he2003,gross1996,eriksson2015,weissbachm2021effect}. Panel data are double-truncated when an individual who dies after the study also has a probability of zero \citep{lynden1971,efron1999,shen2010,moreira2010b,shen2014a,moreira2016,emura2017,franchae2019,emura2020,weiswied2021,weisdoer2022}. The European Commission has published business activity since 2018 with a new definition of an enterprise. Together with the national statistical offices, enterprise foundations and closures are reported, although the mere status of enterprise existence is not reported and the data is hence doubly truncated.

The additional obstacle caused by censoring is that the lifespans of observable individuals are only partially available. The lifespan of an enterprise that was founded in 2018 and is still active when a study ends, here after 2019, is right-censored. In the case of a foundation before 2018, the foundation date of the enterprise is not ascertained retrospectively and the lifespan is hence also censored. Table \ref{tabelle_amtsdaten} summarizes the situation in annual aggregates. Models that combine truncation with censoring have been studied but inefficiently, i.e. without maximizing the likelihood \citep{lai1991,honore2012}. 
\begin{table*}[t]
	\caption{Counts of enterprise foundations and enterprise closures 2018 and 2019: Source: RDC of the Federal Statistical Office and Statistical Offices of the Federal States of Germany, AFiD-Panel (UDE), survey years [2018-2019], own calculations$^{\dagger}$}	\label{tabelle_amtsdaten}
	\centering
	\begin{threeparttable}
		\begin{tabular}{llrrcc}
			\hline \hline
			~ & ~ & 2018 & 2019  & $l^{obs}_j$ & $r^{obs}_j$ \\ \hline \hline
			no. of closures  & founded in 2018 & $\times$ & 50,432   & 0 & 0 \\
			& & & \footnotesize{($y^{obs}_j=0.5$)} & & \\
			~ & founded before 2018 & 246,004  & 315,320  & 1 & 0  \\ 
						& & \footnotesize{($y^{obs}_j=0.5$)} & \footnotesize{($y^{obs}_j=1.5$)} & & \\\hline
			no. of foundations & ~ & 219,417  & 248,020  & 0 & 1 \\ 
						& & \footnotesize{($y^{obs}_j=1.5$)}\tnote{$\ddagger$} & \footnotesize{($y^{obs}_j=0.5$)} & & \\
			\hline \hline
		\end{tabular}
		\begin{tablenotes}
			\item[$\dagger$] Symbols $y^{obs},l^{obs},r^{obs},$ and $j$ are introduced in Section \ref{mal}.
			\item[$\ddagger$] Code excludes the 50,432 uncensored enterprises of the first row.
		\end{tablenotes}
	\end{threeparttable}
\end{table*}

%%%%%%%%%%%%%%%%%%%%%%%%%%%%%%%%%%%%%%%%%%%%%%%%%%%%%%%%%%%%%%%%%%%%%%%%%%%%%%%%%%%%%%%%%%%%%%%%%%%%%%%%%%%%%%%%%
%\newpage

\section{Model, assumptions and likelihood} \label{mal}

We assume a panel study that registers the birth and the death event of its units. The study starts at some point and continues for $s$ time units (namely years). We define the population as units born $G-s$ years before the study beginning until the end of the study. Denote a simple random sample of units $i=1,\dots, n$ that are drawn from the population with measurements $X_i$ and $T_i$ and the probability space $(\Omega, \mathcal F, \{\p_{\theta}:\theta \in \Theta\})$. Here $X_i$ is the lifespan and $T_i$ the birthdate, reformulated as the age of the unit at the beginning of the study. For our example to be worked out in Section \ref{secexa}, Table \ref{tabelle_amtsdaten} lists the realized measurements, aggregated to a discrete-time scale. We elaborate the model for the time-continuous scale.  
\begin{enumerate}[label= \textnormal{(A\arabic{*})}, leftmargin=1cm]
	\item\label{A1:Fiete} Let $\Theta = [\eps, 1/\eps]$ for a small $\eps \in (0,1)$.
	\item\label{A2:Fiete} Let $X_i \sim \text{Exp}(\theta_0)$ with $\theta_0 \in (\eps, 1/\eps)$ and $T_i \sim \text{Unif}([-s, G-s])$ for fixed $0<s<G$.
	\item\label{A3:Fiete} Let $X_i$ and $T_i$ be statistically independent.
\end{enumerate}
By the design of the study, a sample unit is only observable subject to additional conditions. We follow \cite{honore2012} and set $Y_i := X_i - T_i$ for the case of a unit born before the study, i.e. for $T_i > 0$, $Y_i :=X_i$ for the case of a unit with birth and death during the study, i.e. for $T_i \leq 0$ and $X_i \leq T_i + s$, and finally $Y_i:=T_i + s$  for the case of a birth during the study but with death after the study, i.e. for $T_i \leq 0$ and $T_i + s < X_i$. Furthermore, we define indicators for truncation and censoring $L_i  := \chi_{\{T_i > 0\}}$ and $R_i := \chi_{\{T_i + s < X_i \}}$ and describe the data without left- and right-truncated lifespans. 
\begin{enumerate}[resume*]
	\item\label{A4:Fiete} Let $(Y_i, L_i, R_i)$ be unobserved if (i) $(L_i,R_i)=(1,1)$ or (ii) if $(L_i,R_i)=(1,0)$ and $Y_i < 0$.
\end{enumerate}
One can show that restricting use to the reformulated measurements does not result in a loss of information, that is, the likelihoods are equal. For $D := [0,s] \times  \{(0,0),(0,1),(1,0) \}$  and  $\theta \in \Theta$, the probability of unit $i$ to be observed is 
\begin{equation}\label{alpha} \alpha_\theta := \p_\theta \big((Y_i, L_i, R_i) \in D \big)= \frac{s}{G} + \frac{1}{G\theta} \big(1-e^{-\theta s}\big) \big(1-e^{-\theta(G-s)}\big) > 0.
\end{equation}
Closed-form expressions of first and second derivatives with respect to $\theta$, $\dot{\alpha}_{\theta}$ and $\ddot{\alpha}_{\theta}$, are easily derived. We denote the dataset of observations as $\{(y^{obs}_j, l^{obs}_j, r^{obs}_j)\}_{j\leq m}$, where $m$ is the realized random number of observations $M$. In the early literature on truncation, \cite{heckman1976} for instance did not use different indices for latent and observed units and their measurements, but indicated the observed $m$ units as sorted to the beginning of dataset. \cite{shen2014a} considers the situation where the measurements of the unobserved sample units are unknown, but at least the number of unobserved units $n-m$ is known. Of course, an observation $(Y^{obs}_j, L^{obs}_j, R^{obs}_j)$ is not measurable with respect to $\p_{\theta}$. An argument from \cite{weiswied2021} suggests for large $n$ that, under the Assumptions \ref{A1:Fiete}-\ref{A4:Fiete} and for $(y, l, r) \in D$
\begin{equation}\label{ylrdichte}
	f_{\theta}(y, l, r)  =   \frac{1}{\alpha_{\theta} G} e^{-\theta y} \big \{1-e^{-\theta (G-s)} \big \}^l \big \{\theta (s-y)\big \}^{(1-l)(1-r)}
\end{equation}
as the density of $(Y^{obs}_j,L^{obs}_j,R^{obs}_j)$ with respect to the measure ${\p}^{obs}_{\theta}$ that corresponds to the population restricted by $D$. With the reformulation of the data generation starting from $(Y_i, L_i, R_i)$ instead of $(X_i, T_i)$, it is straightforward to extend the likelihood approximation for double-truncated durations \citep{weiswied2021} to an approximate likelihood $L(Data \vert \theta, n)$ for double-truncated and censored durations:
\begin{equation}\label{likehood}
	\left (\frac{n}{G}\right)^M e^{3s-n\alpha_{\theta}} \,  e^{-\theta \sum_{j=1}^M Y^{obs}_j} \prod_{j=1}^M\big \{1-e^{-\theta (G-s)} \big \}^{L^{obs}_j} \big \{\theta (s-Y^{obs}_j)\big \}^{(1-L^{obs}_j)(1-R^{obs}_j)}
\end{equation}

%%%%%%%%%%%%%%%%%%%%%%%%%%%%%%%%%%%%%%%%%%%%%%%%%%%%%%%%%%%%%%%%%%%%%%%%%%%%%%%%%%%%%%%%%%%%%%%%%%%%%%%%%%%%%%%%%
%\newpage
\section{Identification and estimation}

The elementary definition of identification in the population assumes the data to be a simple random sample. However, following \cite{And0}, Assumptions \ref{A1:Fiete}-\ref{A4:Fiete} formulate the data $\{(y^{obs}_j, l^{obs}_j, r^{obs}_j)\}_{j\leq m}$ to be a truncated sample. \cite{efron1999} instead assume the data to be a simple random sample from the population subject to the restriction $D$. Sampling and truncation are not commutative \cite[see][Figure 2]{topaweis2024}. In order to prove consistency when minimizing \eqref{likehood}, \citet[][Theorem 5.7]{vaart1998} formulates a generalized  definition for identification. A preliminary step in order to prove the generalized definition is to prove identification of $\theta$ in the ordinary sense while assuming that $\{(y^{obs}_j, l^{obs}_j, r^{obs}_j)\}_{j\leq m}$ is a simple random sample from the $D$-restricted population.   
\begin{lem} \label{lemindettrunc}
	The parameter $\theta$ is identified in the $D$-restricted population.
\end{lem}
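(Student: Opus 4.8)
The plan is to show that the map $\theta \mapsto \p^{obs}_{\theta}$ is injective, i.e.\ that $f_{\theta_1}=f_{\theta_2}$ almost everywhere on $D$ forces $\theta_1=\theta_2$. Rather than working with the full density \eqref{ylrdichte} over all of $D$, I would restrict attention to the single stratum of units that are censored but not truncated, namely $(l,r)=(0,1)$. There the exponent $(1-l)(1-r)=(1-0)(1-1)=0$ kills the algebraic factor $\{\theta(s-y)\}^{(1-l)(1-r)}$, and $\{1-e^{-\theta(G-s)}\}^{l}=1$, so \eqref{ylrdichte} collapses to the clean form
\begin{equation*}
	f_{\theta}(y,0,1)=\frac{1}{\alpha_{\theta}\,G}\,e^{-\theta y},\qquad y\in[0,s],
\end{equation*}
a strictly positive, continuous function of $y$ whose only dependence on $\theta$ beyond the rate sits in the constant $\alpha_{\theta}$.

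The key observation is that the unknown, $\theta$-dependent normalising constant $\alpha_{\theta}$ from \eqref{alpha} cannot obstruct identification, precisely because it does not vary with $y$. Concretely, suppose $f_{\theta_1}(\cdot,0,1)=f_{\theta_2}(\cdot,0,1)$; by continuity this holds for every $y\in[0,s]$, not merely almost everywhere. Dividing the identity at a generic $y$ by the same identity at $y=0$ cancels the factor $1/(\alpha_{\theta_i}G)$ and leaves $e^{-\theta_1 y}=e^{-\theta_2 y}$ for all $y\in[0,s]$. Equivalently, on a logarithmic scale each side is an affine function of $y$, and matching the slopes yields $(\theta_1-\theta_2)\,y=0$ for all $y$ in a non-degenerate interval; since $s>0$ this is possible only if $\theta_1=\theta_2$.

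A preliminary check is that the chosen stratum is legitimately available: one must confirm $\p^{obs}_{\theta}\big((L,R)=(0,1)\big)>0$, which follows from $\alpha_{\theta}>0$ in \eqref{alpha} together with $\int_0^s e^{-\theta y}\diff y>0$, so that the conditional density above is well defined. I do not expect any serious obstacle here. Once the problem is reduced to a single exponential shape on an interval of positive length, identification is the elementary fact that the rate of an exponential is pinned down by its slope on a log scale. The only point requiring care—and the reason I isolate the $(0,1)$ stratum rather than the uncensored one, where the extra factor $\theta(s-y)$ would appear—is to keep the $y$-dependence purely exponential, so that the nuisance constant $\alpha_{\theta}$ drops out by a single ratio.
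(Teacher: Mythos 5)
Your proof is correct and follows essentially the same route as the paper: both arguments exploit the $y$-dependence of the density $e^{-\theta y}$ to cancel the nuisance constant $\alpha_\theta$ and conclude $\theta_1=\theta_2$, the paper by differentiating the full density ratio in $y$ and you by taking a ratio at two $y$-values within the single stratum $(l,r)=(0,1)$. Your restriction to that stratum (with the check that it has positive probability) is a harmless simplification of the same idea.
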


\begin{proof}
	The probability measure in the restricted population is ${\p}^{obs}_{\theta}=\p_{\theta}/\alpha_{\theta}$. We need to show that from ${\p}^{obs}_{\theta_1}={\p}^{obs}_{\theta_2}$ with $\theta_1, \theta_2\in \Theta$ immediately follows $\theta_1 = \theta_2$. Let be ${\p}^{obs}_{\theta_1}={\p}^{obs}_{\theta_2}$, so that for arbitrary $(y,l,r)$ in the support $D$ it follows for the observable data that ${\p}^{obs}_{\theta_1}( Y^{obs}_j \leq y, L^{obs}_j = l, R^{obs}_j = r ) =  {\p}^{obs}_{\theta_2}(  Y^{obs}_j \leq y, L^{obs}_j = l,  R^{obs}_j = r )$, which means $F_{\theta_1} = F_{\theta_2}$.  Hence it is by \eqref{ylrdichte} 
	\[1= \frac{f_{\theta_1}(y, l, r )}{f_{\theta_2}(y, l, r )} = \frac{\alpha_{\theta_2}}{\alpha_{\theta_1}}  e^{-y(\theta_1 - \theta_2)} \left \{ \frac{1-e^{-\theta_1 (G-s)}}{1-e^{-\theta_2 (G-s)}} \right \}^{l} \left \{ \frac{\theta_1}{\theta_2} \right\}^{(1-l)(1-r)}\]
	on the support of $f_{\theta}$. It now suffices to re-sort the $y$-dependent part on the left and differentiate once with respect to $y$, to result in $(\theta_1 - \theta_2)e^{y(\theta_1 - \theta_2)} = 0$ which is equivalent to $\theta_1 = \theta_2$.
\end{proof}

Maximization of likelihood \eqref{likehood} yields $n = m/\alpha_\theta$ and we profile out $n$. Define $m_\theta(y,l,r) :=\chi_D(y,l,r) \log f_\theta(y,l,r) - \chi_D(y,l,r) C(y,l,r)$,  with $\theta$-independent 
$C(y,l,r) := - \log (G) +  \log(s-y) (1-l)(1-r)$. Then \eqref{likehood} can be represented as  
\begin{equation} \label{crit} 
	\log L (data \vert  \theta, n)\big\vert_{n=m/\alpha_\theta} = \sum_{j=1}^m m_\theta(y^{obs}_j,l^{obs}_j,r^{obs}_j) + C_m'
\end{equation}
with parameter-independent $C'_m:= m \log (m) - m -m \log(G)+3s + \sum_{j=1}^m (1-l_j)(1-r^{obs}_j)\log (s-y^{obs}_j)$.  Define further $M_n(\theta):=\frac{1}{n} \sum_{i=1}^n m_\theta(Y_i, L_i, R_i)$ and $ M(\theta):=\E_{\theta_0} M_n(\theta)$ (where $\E_{\theta}$ is the expectation with respect to $\p_{\theta}$), 
so that maximizing arguments of \eqref{crit} and $M_n(\theta)$ are identical. The first does not need the unobserved $n$ and can be used for the computation, whereas the second is an average and can be used for the analysis.
Under the Assumptions \ref{A1:Fiete}-\ref{A4:Fiete}, short calculations yield the following properties (and especially use Lemma \ref{lemindettrunc} for (e)):
\begin{enumerate}
	\item[(a)] For $0< s < G$, the function $\theta \mapsto m_\theta(y,l,r)$ is continuous on $\Theta$ for all $(y,l,r)\in D$.
	\item[(b)] The function $\theta \mapsto m_\theta$ is dominated by a $\p_{\theta_0}$-integrable $\theta$-independent function.
	\item[(c)] It is $\sup_{\theta\in \Theta} | M_n(\theta) - M(\theta) | \overset{p}{\to} 0$.
	\item[(d)] The function $\theta \mapsto M(\theta)$ is continuous on $\Theta$.
	\item[(e)] The true parameter $\theta_0$ is the unique maximizer of $M(\theta)$.
\end{enumerate}

As maximizing argument of the criterion function $M_n(\theta)$ define
\begin{equation} \label{defschaetz}
	\hat \theta_n := \inf \big \{\hat \theta \in \Theta \, | \, \forall \theta \in \Theta: \,M_n(\hat \theta) \geq M_n(\theta) \big\}.
\end{equation}

\begin{satz}\label{satzcon}
	Under Assumptions \ref{A1:Fiete}-\ref{A4:Fiete} and $0 < s < G$ it is $\hat \theta_n \overset{p}{\to} \theta_0$ for $n \to \infty$.
\end{satz}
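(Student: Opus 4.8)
The plan is to invoke the standard consistency result for M-estimators, namely \citet[][Theorem 5.7]{vaart1998}, whose hypotheses are almost entirely supplied by the preceding properties (a)--(e). That theorem requires three ingredients: uniform convergence of the criterion function to its limit, a well-separated maximum of the limit function, and an estimator that nearly maximizes the empirical criterion. The first is exactly property (c). The third is immediate from the definition \eqref{defschaetz}: since $\hat\theta_n$ is by construction a maximizer of $M_n$ over $\Theta$, the inequality $M_n(\hat\theta_n) \geq M_n(\theta_0)$ holds without any $\landau_p(1)$ slack. It therefore remains only to establish the well-separated maximum and to confirm that $\hat\theta_n$ is well defined.

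The well-separation condition demands that for every $\delta > 0$,
\[ \sup_{\theta \in \Theta:\, |\theta - \theta_0| \geq \delta} M(\theta) < M(\theta_0). \]
To verify it, I would first note that by Assumption \ref{A1:Fiete} the parameter space $\Theta = [\eps, 1/\eps]$ is compact, so that the set $K_\delta := \{\theta \in \Theta : |\theta - \theta_0| \geq \delta\}$ is compact as a closed subset of a compact set. By property (d) the map $\theta \mapsto M(\theta)$ is continuous and hence attains its supremum over $K_\delta$ at some $\theta^\star \in K_\delta$. Since $\theta^\star \neq \theta_0$ and, by property (e), $\theta_0$ is the \emph{unique} maximizer of $M$, we obtain $M(\theta^\star) < M(\theta_0)$, which is precisely the desired strict inequality. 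This is where Lemma \ref{lemindettrunc}, feeding into property (e), does the essential work.

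Finally, for $\hat\theta_n$ to be well defined the set of maximizers of $M_n$ over $\Theta$ must be nonempty: for each realization the map $\theta \mapsto M_n(\theta)$ is a finite average of the continuous functions $\theta \mapsto m_\theta(\cdot)$ of property (a), hence continuous on the compact $\Theta$, so it attains its maximum, and taking the infimum of the resulting closed set of maximizers in \eqref{defschaetz} yields a measurable estimator. With all three hypotheses of \citet[][Theorem 5.7]{vaart1998} in place, the conclusion $\hat\theta_n \overset{p}{\to} \theta_0$ follows directly. I expect the only genuinely delicate step to be the derivation of the well-separated maximum: although it reduces to compactness of $\Theta$ combined with continuity and uniqueness of the maximizer, it is the one ingredient not handed over verbatim by (a)--(e), and compactness of $\Theta$ via Assumption \ref{A1:Fiete} is what makes the reduction from ``unique maximizer'' to ``well-separated maximizer'' legitimate.
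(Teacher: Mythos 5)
Your proposal is correct and follows essentially the same route as the paper: both invoke van der Vaart's Theorem 5.7, use property (c) for uniform convergence, derive the well-separated maximum from the uniqueness in (e) together with the continuity in (d) and the compactness of $\Theta$, and note that $M_n(\hat\theta_n)\geq M_n(\theta_0)$ holds by construction. Your explicit verification that the supremum over $\{\theta : |\theta-\theta_0|\geq\delta\}$ is attained and strictly below $M(\theta_0)$ merely spells out what the paper states in compressed form.
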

\begin{proof}
	The first condition of \citet[][Theorem 5.7]{vaart1998} is $(c)$. The second condition is the generalized identification definition. Note that $\theta_0$ uniquely maximizes $M(\theta)$ due to $(e)$, and that the continuity of $M(\theta)$ results from $(d)$ together with the compactness of $\Theta$. Hence, each sequence $(\tilde \theta_n)_{n\in \N}$ with $M_n(\tilde \theta_n) \geq M_n(\theta_0)-o_P(1)$ converges in probability against $\theta_0$. The latter also holds for the sequence defined in \eqref{defschaetz} because $\hat \theta_n$ is the global maximizer of $M_n$ and hence $M_n(\hat \theta_n) \geq M_n(\theta_0)$.
\end{proof}

The consistency is one condition for the asymptotic normality of $\hat \theta_n$ and we now state the remaining conditions.
\begin{kor}\label{korro1}
	It is $\theta \mapsto m_{\theta}(y,l,r)$ two times differentiable in $\theta$ on $(\eps, 1/ \eps)$ for all $(y,l,r)\in D$.
\end{kor}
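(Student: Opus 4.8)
The plan is to exploit the fact that the subtracted term $C(y,l,r)$ was designed precisely to cancel the $\theta$-free pieces of $\log f_\theta$ that are awkward as functions of $y$. First I would fix $(y,l,r) \in D$, so that $\chi_D(y,l,r) = 1$ and $l,r \in \{0,1\}$ are constants, and insert \eqref{ylrdichte} together with the definition of $C$ to reduce $m_\theta$ to
\begin{equation*}
m_\theta(y,l,r) = -\log \alpha_\theta - \theta y + l\log\bigl(1 - e^{-\theta(G-s)}\bigr) + (1-l)(1-r)\log\theta.
\end{equation*}
The $-\log G$ and the $(1-l)(1-r)\log(s-y)$ contributions drop out against $C$, so no $\log(s-y)$ survives and the question collapses to one-dimensional differentiability of a sum of four elementary $\theta$-functions on the open interval $(\eps, 1/\eps)$.

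Then I would differentiate summand by summand. The term $-\theta y$ is affine, hence $C^{\infty}$. Since $\Theta = [\eps, 1/\eps]$ with $\eps > 0$, the interior is bounded away from the origin, so $\log\theta$ is $C^{\infty}$ on $(\eps, 1/\eps)$ and the term $(1-l)(1-r)\log\theta$ is twice differentiable. For $l\log(1 - e^{-\theta(G-s)})$ I would set $g(\theta) := 1 - e^{-\theta(G-s)}$; the assumption $0 < s < G$ gives $G - s > 0$, and with $\theta > \eps > 0$ one has $g(\theta) \in (0,1)$, in particular $g(\theta) > 0$, so $g$ is smooth and strictly positive and $\log g$ is twice (indeed infinitely) differentiable by the chain rule.

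The only term that genuinely invokes an earlier result is $-\log\alpha_\theta$, and this is where I would locate the (mild) crux. The expression \eqref{alpha} for $\alpha_\theta$ is an elementary combination of $\exp$ and $\theta^{-1}$; since $\theta > \eps > 0$ stays clear of the pole of $\theta^{-1}$, $\alpha_\theta$ is smooth on $(\eps, 1/\eps)$ with the closed-form first and second derivatives $\dot\alpha_\theta, \ddot\alpha_\theta$ already announced after \eqref{alpha}. Crucially, \eqref{alpha} also records $\alpha_\theta > 0$, so the composition $-\log\alpha_\theta$ is twice differentiable, its second derivative being $-(\ddot\alpha_\theta\,\alpha_\theta - \dot\alpha_\theta^2)/\alpha_\theta^2$, a ratio with non-vanishing denominator. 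Summing the four contributions yields the claimed twice differentiability. I expect the main (and still minor) obstacle to be nothing more than bookkeeping the strict positivity of the two logarithm arguments $\alpha_\theta$ and $1 - e^{-\theta(G-s)}$ across the open interval; both are already guaranteed, by \eqref{alpha} and by $s < G$, respectively, and once positivity is secured the smoothness of $\exp$, $\log$, and $\theta^{-1}$ away from $0$ finishes the argument.
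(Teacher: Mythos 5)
Your proposal is correct and follows essentially the same route as the paper: the paper's proof likewise reduces $m_\theta$ to the sum $\chi_D(y,l,r)\{-\log\alpha_\theta - \theta y + K(\theta)l + \log(\theta)(1-l)(1-r)\}$, writes down the explicit second derivative $\chi_D(y,l,r)\{-\ddot{\alpha}_\theta\alpha_\theta^{-1} + \dot{\alpha}_\theta^2\alpha_\theta^{-2} + K''(\theta)l - \theta^{-2}(1-l)(1-r)\}$, and justifies its existence by the positivity of $\alpha_\theta$, of $e^{\theta(G-s)}-1$ (from $G>s$), and of $\theta\geq\eps$. The only difference is that the paper treats Corollaries \ref{korro1} and \ref{korro2} jointly and therefore adds the integrable-majorant bound needed for the interchange of differentiation and expectation, which your proof (correctly) omits since it addresses only the differentiability claim.
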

\begin{kor}\label{korro2}
	It is $ \frac{d^2}{d \theta^2} M(\theta) = \E_{\theta_0}\, \frac{d^2}{d \theta^2} m_{\theta}(Y_i, L_i, R_i)$.
\end{kor}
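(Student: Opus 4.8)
The plan is to read Corollary \ref{korro2} as the assertion that two successive $\theta$-differentiations may be interchanged with the expectation $\E_{\theta_0}$. Since $(Y_i,L_i,R_i)$ are identically distributed, $M(\theta)=\E_{\theta_0}m_\theta(Y_1,L_1,R_1)=\int m_\theta\diff\p_{\theta_0}$, and I would establish
\[
\frac{d}{d\theta}\int m_\theta\diff\p_{\theta_0}=\int\frac{d}{d\theta}m_\theta\diff\p_{\theta_0},
\]
and analogously for the second derivative, by applying the standard differentiation-under-the-integral theorem twice. That theorem needs, beyond the pointwise twofold differentiability already supplied by Corollary \ref{korro1}, an integrable $\theta$-independent majorant for the relevant derivative on a neighbourhood of each interior point of $(\eps,1/\eps)$.

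The first thing I would exploit is that subtracting $C(y,l,r)$ cancels the singular $\log(s-y)$ contribution. Writing out $\log f_\theta$ from \eqref{ylrdichte} and simplifying yields the bounded expression
\[
m_\theta(y,l,r)=\chi_D(y,l,r)\Big[-\log\alpha_\theta-\theta y+l\log\big(1-e^{-\theta(G-s)}\big)+(1-l)(1-r)\log\theta\Big].
\]
Differentiating in $\theta$ and inserting the closed forms of $\dot\alpha_\theta$ and $\ddot\alpha_\theta$ then gives explicit $\dot m_\theta$ and $\ddot m_\theta$, each still carrying the indicator $\chi_D$ and hence vanishing off $D$.

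Next I would build the majorant. On $D$ the variable $y$ lies in $[0,s]$ and $l,r\in\{0,1\}$, so the only sources of unboundedness are the denominators. Fixing a compact neighbourhood $K\subset(\eps,1/\eps)$ of the point where the derivative is taken, $\theta$ is bounded away from $0$ and $\infty$; by \eqref{alpha} one has $\alpha_\theta\ge s/G>0$, and $\alpha_\theta,\dot\alpha_\theta,\ddot\alpha_\theta$ are continuous on the compact $\Theta$, hence bounded; and since $0<s<G$ the factor satisfies $1-e^{-\theta(G-s)}\ge 1-e^{-\eps(G-s)}>0$. Consequently $|\dot m_\theta|$ and $|\ddot m_\theta|$ are bounded on $D\times K$ by constants, which are trivially $\p_{\theta_0}$-integrable. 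One application of the theorem gives $\frac{d}{d\theta}M(\theta)=\E_{\theta_0}\dot m_\theta$, and a second application with the majorant for $\ddot m_\theta$ gives the claimed identity.

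The only genuine, though mild, obstacle is keeping the denominators $\alpha_\theta$ and $1-e^{-\theta(G-s)}$ together with the factor $1/\theta$ under control: these are what could destroy integrability near the endpoints of $\Theta$. Restricting to a compact subset of the \emph{open} interval $(\eps,1/\eps)$ — legitimate because the assertion is pointwise in the interior — keeps all three uniformly bounded, and this is exactly where the hypothesis $0<s<G$ is used. Everything else is the bounded, $\theta$-continuous bookkeeping already arranged by the cancellation of the $\log(s-y)$ term.
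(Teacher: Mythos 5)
Your proposal is correct and takes essentially the same route as the paper: both write out $\dot m_\theta$ and $\ddot m_\theta$ explicitly (with the $\log(s-y)$ term already cancelled by $C$), bound them on $D$ by a constant using the positivity of $\alpha_\theta$, of $1-e^{-\theta(G-s)}$ (from $0<s<G$) and of $\theta$ on the compact parameter set, and then invoke dominated differentiation under the integral twice. The only cosmetic difference is that you localize to compact neighbourhoods inside $(\eps,1/\eps)$ while the paper bounds directly over all of $\Theta=[\eps,1/\eps]$, which works equally well since $\eps>0$.
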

For all $(y,l,r)\in D$ and $\theta_1, \theta_2 \in \Theta$ it is 
\begin{equation} \label{lemma9}		|m_{\theta_1}(y,l,r) - m_{\theta_2}(y,l,r)|\leq \dot{m}(y,l,r) |\theta_1 - \theta_2|
\end{equation}
for a measurable bound $\dot{m}$ with $\E_{\theta_0} \dot{m}(Y_i, L_i, R_i)^2< \infty$.

As final preparation, and similar to the Fisher information in maximum likelihood theory, we have (with proof in \ref{app1}):
\begin{lem}\label{lemmawichtig}
	For $0< s < G$ and for any $\theta_0 \in \Theta$  it is $\frac{d^2}{d \theta^2} M(\theta_0)\neq 0$.
\end{lem}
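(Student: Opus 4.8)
The plan is to recognize $\frac{d^2}{d\theta^2}M(\theta_0)$ as a positive multiple of the negative Fisher information of the $D$-restricted model, which is nonzero precisely because the score is non-degenerate. First I would invoke Corollary \ref{korro2} to move both derivatives inside the expectation; since the additive term $C(y,l,r)$ is $\theta$-independent, $\frac{d^2}{d\theta^2}m_\theta = \frac{d^2}{d\theta^2}\log f_\theta$ on $D$ and vanishes off $D$. The expectation $\E_{\theta_0}$ is taken under the unrestricted $\p_{\theta_0}$, but on $D$ one has $\p_{\theta_0} = \alpha_{\theta_0}\,\p^{obs}_{\theta_0}$ by the definition $\p^{obs}_{\theta}=\p_{\theta}/\alpha_{\theta}$, so
\[\frac{d^2}{d\theta^2}M(\theta_0) = \alpha_{\theta_0}\, \E^{obs}_{\theta_0}\!\left[\frac{d^2}{d\theta^2}\log f_\theta\Big|_{\theta=\theta_0}\right],\]
with $\alpha_{\theta_0}>0$ guaranteed by \eqref{alpha}.

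Next I would apply the second Bartlett identity on the restricted model. Because $f_\theta$ is, for every $\theta$, a probability density over $D$ with respect to the fixed reference measure, $\int_D f_\theta = 1$; differentiating twice under the integral (legitimate by the smoothness from Corollary \ref{korro1} and the domination underlying Corollaries \ref{korro1}--\ref{korro2}) gives $\int_D \ddot f_\theta = 0$. Writing $\frac{d^2}{d\theta^2}\log f_\theta = \ddot f_\theta/f_\theta - (\dot f_\theta/f_\theta)^2$ and integrating against $f_{\theta_0}$ then yields, at $\theta=\theta_0$,
\[\E^{obs}_{\theta_0}\!\left[\frac{d^2}{d\theta^2}\log f_\theta\Big|_{\theta=\theta_0}\right] = -\,\E^{obs}_{\theta_0}\!\left[u^2\right] =: -I(\theta_0),\]
where $u := \frac{d}{d\theta}\log f_\theta\big|_{\theta=\theta_0}$ is the score and $I(\theta_0)$ the Fisher information of the $D$-restricted family. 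Hence $\frac{d^2}{d\theta^2}M(\theta_0) = -\alpha_{\theta_0}\, I(\theta_0)$, and it remains to show $I(\theta_0)>0$.

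Finally I would verify non-degeneracy of the score. From \eqref{ylrdichte}, on the stratum $(l,r)=(0,0)$ the score reduces to the affine map $u(y,0,0) = -\dot\alpha_{\theta_0}/\alpha_{\theta_0} - y + 1/\theta_0$, which is strictly monotone and hence non-constant in $y$. This stratum carries positive $\p^{obs}_{\theta_0}$-mass and, conditionally on it, $Y^{obs}$ has a continuous distribution on $[0,s]$ (density proportional to $e^{-\theta_0 y}(s-y)$), so $u$ fails to vanish on a set of positive probability; therefore $I(\theta_0) = \E^{obs}_{\theta_0}[u^2] > 0$. Combining the three steps gives $\frac{d^2}{d\theta^2}M(\theta_0) = -\alpha_{\theta_0}\, I(\theta_0) < 0 \neq 0$, consistent with property (e) that $\theta_0$ is an interior maximizer of $M$. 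I expect the only genuinely delicate point to be the bookkeeping of the second step: $M$ is an expectation under the unrestricted law, whereas the Bartlett structure lives on the restricted density $f_{\theta_0}$, so one must keep the factor $\alpha_{\theta_0}$ straight and ensure the differentiation under the integral is covered by the regularity already established; the positivity in the third step is then essentially immediate from the explicit form of the score.
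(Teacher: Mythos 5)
Your proof is correct, and it takes a genuinely different route from the paper's. The paper proceeds by brute-force computation: it defines $\eta(\theta\,|\,s,G):=\E_\theta[D^2_\theta m_\theta(Y_i,L_i,R_i)]$, evaluates it explicitly via the expectations \eqref{elelr}, rewrites it through the representations \eqref{darstellungeta}--\eqref{darstellung2eta} in terms of the auxiliary functions $H_k(\theta)$, and then shows each resulting summand is nonpositive with strictly negative total \eqref{lemdelta}, using elementary inequalities such as $e^x\geq x+1$. You instead identify $\frac{d^2}{d\theta^2}M(\theta_0)$ as $-\alpha_{\theta_0}$ times the Fisher information of the $D$-restricted family via the second Bartlett identity, and then establish positivity of that information by exhibiting a non-degenerate score on the stratum $(l,r)=(0,0)$, where $u(y,0,0)=-\dot\alpha_{\theta_0}/\alpha_{\theta_0}-y+1/\theta_0$ is affine with slope $-1$ and $Y^{obs}$ is continuously distributed with positive stratum mass. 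Both arguments in fact deliver the stronger conclusion $\frac{d^2}{d\theta^2}M(\theta_0)<0$. What your route buys is brevity and conceptual transparency --- it explains \emph{why} the quantity is negative rather than verifying it algebraically, and it would generalize immediately to other parametric families truncated to $D$. What it costs is one extra regularity check that you gesture at but should make explicit: the identity $\int_D\ddot f_\theta=0$ requires differentiating $\int_D f_\theta=1$ twice under the integral, and the domination cited in Corollaries \ref{korro1} and \ref{korro2} concerns $\partial^2_\theta\log f_\theta$, not $\ddot f_\theta$ itself; the justification is nonetheless routine here, since $f_\theta$ from \eqref{ylrdichte} is an explicit smooth function of $\theta$ on the compact $\Theta$ and $D$ has finite reference measure, so $\sup_{\theta\in\Theta}|\ddot f_\theta|$ is bounded on $D$. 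With that one sentence added, your proof is complete and arguably cleaner than the appendix calculation.
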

Here we write as usual for a function $g$, $\frac{d}{dx}g(a)$ short for $\frac{d}{dx}g(x)\vert_{x=a}$.
\begin{satz}\label{normal}
	Under the Assumptions \ref{A1:Fiete}-\ref{A4:Fiete} and $0 < s < G$, the sequence $ \{\sqrt{n} (\hat \theta_n - \theta_0)  \}_{n\in \N}$ is asymptotically normally distributed with expectation $0$ and variance 
	$\sigma^2 := \E_{\theta_0}[\frac{d}{d \theta} m_{\theta_0}(Y_i, L_i, R_i)^2  ]  /   \E_{\theta_0} [ \frac{d^2}{d \theta^2} m_{\theta_0}(Y_i, L_i, R_i)  ]^2$.
\end{satz}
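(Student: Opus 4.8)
The plan is to recognize Theorem \ref{normal} as an instance of the general asymptotic-normality result for M-estimators, \citet[][Theorem 5.23]{vaart1998}, and to show that the preparatory statements collected above are precisely its hypotheses. With the identifications $M_n(\theta)$ for the empirical criterion and $M(\theta)=\E_{\theta_0}M_n(\theta)$ for its limit, the estimator $\hat\theta_n$ from \eqref{defschaetz} is an \emph{exact} maximizer, so the near-maximizer requirement $M_n(\hat\theta_n)\geq \sup_{\theta\in\Theta} M_n(\theta)-o_P(n^{-1})$ holds trivially, and its consistency $\hat\theta_n\overset{p}{\to}\theta_0$ is Theorem \ref{satzcon}.

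Next I would verify the three analytic conditions at the interior point $\theta_0\in(\eps,1/\eps)$. Differentiability of $\theta\mapsto m_\theta(y,l,r)$ at $\theta_0$, with score $\frac{d}{d\theta}m_{\theta_0}$, follows from Corollary \ref{korro1}; the Lipschitz bound with square-integrable envelope is exactly \eqref{lemma9}, which in particular secures $\E_{\theta_0}[\frac{d}{d\theta}m_{\theta_0}(Y_i,L_i,R_i)^2]<\infty$. It then remains to produce a second-order Taylor expansion of $M$ at $\theta_0$ with non-vanishing second derivative. Here I would use Corollary \ref{korro2} to write $\frac{d^2}{d\theta^2}M(\theta)=\E_{\theta_0}\frac{d^2}{d\theta^2}m_\theta$ and, by the analogous first-order interchange together with the fact that $\theta_0$ is the interior maximizer of $M$ from property (e), conclude $\frac{d}{d\theta}M(\theta_0)=\E_{\theta_0}\frac{d}{d\theta}m_{\theta_0}=0$. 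Thus $M(\theta)=M(\theta_0)+\tfrac12\frac{d^2}{d\theta^2}M(\theta_0)(\theta-\theta_0)^2+o(|\theta-\theta_0|^2)$, and the coefficient $V_{\theta_0}:=\frac{d^2}{d\theta^2}M(\theta_0)$ is nonzero---hence, in the scalar case, nonsingular---by Lemma \ref{lemmawichtig}.

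With all hypotheses in place, \citet[][Theorem 5.23]{vaart1998} yields the stochastic expansion
\[
\sqrt{n}\,(\hat\theta_n-\theta_0) = -V_{\theta_0}^{-1}\,\frac{1}{\sqrt n}\sum_{i=1}^n \tfrac{d}{d\theta}m_{\theta_0}(Y_i,L_i,R_i) + o_P(1).
\]
Since $\E_{\theta_0}\frac{d}{d\theta}m_{\theta_0}=0$ and the second moment is finite, the Lindeberg--L\'evy central limit theorem gives that $\frac{1}{\sqrt n}\sum_{i=1}^n \frac{d}{d\theta}m_{\theta_0}(Y_i,L_i,R_i)$ is asymptotically $N(0,\E_{\theta_0}[\frac{d}{d\theta}m_{\theta_0}(Y_i,L_i,R_i)^2])$; multiplying by $-V_{\theta_0}^{-1}$ and applying Slutsky's lemma makes $\sqrt n(\hat\theta_n-\theta_0)$ asymptotically $N(0,\sigma^2)$ with $\sigma^2=\E_{\theta_0}[\frac{d}{d\theta}m_{\theta_0}(Y_i,L_i,R_i)^2]/V_{\theta_0}^2$, which is the claimed variance. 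I expect the main obstacle to lie not in invoking the CLT but in justifying the interchange of differentiation and expectation that delivers simultaneously the mean-zero score and the second-order Taylor expansion; this is where the domination argument behind Corollaries \ref{korro1}--\ref{korro2} and the envelope in \eqref{lemma9} do the real work, while Lemma \ref{lemmawichtig} rules out a degenerate limit.
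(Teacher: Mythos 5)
Your proposal is correct and takes essentially the same route as the paper: both invoke \citet[][Theorem 5.23]{vaart1998} and verify its hypotheses using Corollaries \ref{korro1} and \ref{korro2}, the Lipschitz bound \eqref{lemma9}, Lemma \ref{lemmawichtig}, and Theorem \ref{satzcon}. You merely spell out the asymptotic linearization and the CLT step that the cited theorem subsumes, which the paper leaves implicit.
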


\begin{proof}
	The proof applies Theorem 5.23 of \cite{vaart1998}. The mapping $\theta \mapsto m_\theta(y,l,r)$ is differentiable for all $(y,l,r)\in D$ and the bound \eqref{lemma9} holds. Furthermore, $\theta \mapsto M(\theta)$ allows a second-order Taylor expansion in the maximizer $\theta_0 \in (\eps, 1/\eps)$ by Corollaries \ref{korro1} and \ref{korro2}. The nonsingularity given by Lemma \ref{lemmawichtig} is an additional condition.  The requirement $M_n(\hat \theta_n) \geq \sup_{\theta \in \Theta} M_n(\theta) - o_P(n^{-1})$ holds by virtue of the definition of $\hat \theta_n$. Consistency of the estimator holds by virtue of Theorem \ref{satzcon}.
\end{proof}

%%%%%%%%%%%%%%%%%%%%%%%%%%%%%%%%%%%%%%%%%%%%%%%%%%%%%%%%%%%%%%%%%%%%%%%%%%%%%%%%%%%%%%%%%%%%%%%%%%%%%%%%%%%%%%%%%
%\newpage
\section{Data example} \label{secexa}

We now assume the exponential distribution as a model for the lifespans of German enterprises $X$ (Assumption \ref{A2:Fiete}). In order to ensure data privacy, the exact days of foundation and closure, and hence $x^{obs}_j$ and $t^{obs}_j$, for the annual data in Table \ref{tabelle_amtsdaten}, are not available. In order to demonstrate the method, we assume that any event has occurred in the middle of a year. For instance, the 50,432 enterprises with observed foundation and closure years ($l_j=0,r_j=0$) survived $x^{obs}_j=y^{obs}_j=1$ year. Because the study begin is not an event, $t^{obs}_j$ cannot be defined by $x^{obs}_j-y^{obs}_j$. For a genuinely time-discrete model, $t^{obs}_j$ is the age one year before the study begin \cite[see][for a different design]{scholzweis2025}. For simplicity we assume the begin of the study between 2017 and 2018, so that the 246,004 enterprises with left-truncated  lifespan ($l_j=1$) have been  observable under risk for half a year, i.e. $y^{obs}_j=0.5$. The remaining counts in  Table \ref{tabelle_amtsdaten} have similar explanations and especially Assumption \ref{A4:Fiete} is fulfilled for any number of foundation cohorts $G > 2=s$. Estimator  \eqref{defschaetz} requires maximizing $(n/m) M_n (\theta)  = m^{-1} \sum_{i=1}^n  \chi_D(y_i, l_i, r_i)  \{ - \log \alpha_{\theta}  -\theta  y_i +K(\theta)  l_i + \log(\theta)  (1- l_i)(1-r_i) \} 
=  m^{-1} \sum_{j=1}^m   \{ - \log \alpha_{\theta}  -\theta   y^{obs}_j +K(\theta)  l^{obs}_j + \log(\theta) (1-  l^{obs}_j)(1- r^{obs}_j) \} 
=  - \log \alpha_{\theta} - \theta\cdot 0.9952764 + \log  ( 1-e^{-\theta (G-2)} ) \cdot 0.5456311+ \log(\theta) \cdot 0.0490221$, with $\alpha_{\theta}$ given in \eqref{alpha}.
The verifiable assumptions of Theorem \ref{normal} are fulfilled, and for the calculation of the standard error of $\hat{\theta}_n$, the expectations in $\sigma^2$ of Theorem \ref{normal} can be replaced by averages:
\begin{equation} \label{SE}
	\frac{\hat \sigma^2}{n} := \frac{1}{n}  \frac{\frac{1}{n} \sum_{i=1}^n \frac{d}{d \theta} m_{\hat \theta_n}(y_i, l_i, r_i)^2 }{\frac{1}{n^2}[ \sum_{i=1}^n \frac{d^2}{d \theta^2} m_{\hat \theta_n}(y_i, l_i, r_i) ]^2}
\end{equation}
With the formulae for $\dot{\alpha}_{\theta_0}$ and $\ddot{\alpha}_{\theta_0}$, the standard error is observable, (i) because the indicator reduces the involved sums to $m$ observations, (ii) because $n$ cancels out and (iii) because $\theta_0$ is replaced by $\hat \theta_n$. Numerical results for some $G$ are listed in Table \ref{tabelle_G} and suggest that $\hat \theta_n$ depends on the considered foundation cohorts of $G$ ending with 2019. 
\begin{table}[h]
	\centering
	\caption{Results for data from Table \ref{tabelle_amtsdaten} for different populations: Point estimates \eqref{defschaetz}, selection probability \eqref{alpha}, standard error \eqref{SE}}	\label{tabelle_G}
	\begin{tabular}{rcrcc}
		\hline \hline 
		$G$ & $\hat\theta_n$ & Life expectancy & $\alpha_{\hat \theta_n} $  & Standard error $\hat\sigma / \sqrt{n} \; (\cdot 10^{-4})$  \\ \hline
		5 & 0.2818 &   3.55 & 0.574 &  3.03  \\ 
		10 & 0.1849 &   5.41 & 0.329 & 2.48  \\ 
		15 & 0.1492 &   6.70 & 0.232 &  2.36  \\  
		30 & 0.1111 &   9.00 & 0.124 & 2.58  \\ 
		50 & 0.0972 & 10.28 & 0.076 &  3.13  \\ 
		100 & 0.0922 & 10.85 & 0.038 &  3.78  \\ 
		200 & 0.0921 & 10.86 & 0.019 &  3.82  \\ \hline \hline
	\end{tabular} 
\end{table}
In practical terms, one must keep in mind that for an extensively small $G$, the data may contain enterprises founded before year $2020-G$, i.e. outside the population. Furthermore if $\alpha_{\theta_0}$ becomes too large, the truncated point process can no longer be approximated well by a Poisson process. Criticism may arise from the simplicity of assuming the closure hazard to be constant across ages (Assumption \ref{A2:Fiete}). That the hazard is constant in time, and equivalently lifespan and foundation date are independent (Assumption \ref{A3:Fiete}), may also be inadequate \citep[see][for an older dataset and without censoring]{topaweis2024}. Also, some indication against a constant intensity of foundations (Assumption \ref{A2:Fiete}) also already exists \citep{weisdoer2022}.

%%%%%%%%%%%%%%%%%%%%%%%%%%%%%%%%%%%%%%%%%%%%%%%%%%%%%%%%%%%%%%%%%%%%%%%%%%%%%%%%%%%%%%%%%%%%%%%%%%%%%%%%%%%%%%%%%
%%%%%%%%%%%%%%%%%%%%%%%%%%%%%%%%%%%%%%%%%%%%%%%%%%%%%%%%%%%%%%%%%%%%%%%%%%%%%%%%%%%%%%%%%%%%%%%%%%%%%%%%%%%%%%%%%

%% The Appendices part is started with the command \appendix;
%% appendix sections are then done as normal sections
%\newpage
\appendix

\section{Nonsingularity of the Fisher information  (Lemma \ref{lemmawichtig})}
\label{app1}

Now write $D_\theta^k$ short for $d^k/d \theta^k$ ($k=1,2$). According to the proof of \eqref{lemma9}, $D^1_\theta m_\theta(y,l,r)$ is bounded and expectation and differentiation can be interchanged in the following equation:  
\[D^1_\theta M(\theta) = D^1_\theta \E_{\theta_0} m_\theta (Y_i, L_i, R_i) = \E_{\theta_0}  D^1_\theta m_\theta (Y_i, L_i, R_i) \]

Now 
\begin{eqnarray*}
D_\theta^2M(\theta_0) & = & \E_{\theta_0} [ D_\theta^2 m_{\theta}(Y_i, L_i, R_i)  ]\big |_{\theta = \theta_0} \\
& = & \E_{\theta} [ D_\theta^2 m_{\theta}(Y_i, L_i, R_i)  ]\big |_{\theta = \theta_0}=: \eta(\theta \, | \, s, G)\big|_{\theta = \theta_0}.
\end{eqnarray*} 
Important here is that the expectation with respect to $\p_{\theta_0}$ could be relaxed to an expectation with respect to the measure $\p_\theta$. 
If $ \eta(\theta \, | \, s, G) \neq 0$ for all $\theta \in \Theta$; this will be especially true for $\theta_0$ and will suffice for Lemma \ref{lemmawichtig}.

Note first that (using the uniform distribution of $T_i$):
\begin{equation} \label{elelr}
	\begin{split}
		\E_{\theta} \chi_D(Y_i, L_i, R_i)  L_i = \alpha_{\theta} - \frac{s}{G} \\
		\E_{\theta} \chi_D(Y_i, L_i, R_i) ( 1-L_i)(1-R_i) = \frac{s}{G} - \frac{1}{G \theta} \big (1-e^{-\theta s} \big )
	\end{split}	
\end{equation}

For $0<s<G$ and  $K''(\theta):= -\frac{(G-s)^2}{e^{\theta (G-s)}-1}  -\frac{(G-s)^2}{(e^{\theta (G-s)}-1)^2}$ it is
\begin{equation} \label{darstellungeta}
	\eta(\theta \, | \, s, G) = \Big\{ -\ddot{\alpha}_\theta   + \frac{  \dot{\alpha}_\theta^2 }{\alpha_\theta}  \Big \} + K''(\theta) \left \{ \alpha_{\theta} - \frac{s}{G} \right \} - \frac{1}{\theta^2} \left \{ \frac{s}{G} - \frac{1}{G \theta} \big (1-e^{-\theta s} \big ) \right \}.
\end{equation}

The representation follows directly from \eqref{elelr}, the form of $D_\theta^2 m_\theta(y,l,r)$, the proofs of Corollaries \ref{korro1} and \ref{korro2} and the definition of $\eta$ as $\E_\theta$-expectation thereof. 

The following helps to show the global negativity of $\eta$. The middle summand in \eqref{darstellungeta} can be written as 
\begin{equation} \label{repkstrichstrich}
	\ddot{\alpha}_\theta   - \frac{  \dot{\alpha}_\theta^2 }{\alpha_\theta - s/G}  - \frac{1}{\theta^2} \left \{ \alpha_{\theta} - \frac{s}{G} \right \} + \frac{s^2 e^{-\theta s}}{(1-e^{-\theta s})^2} \left \{ \alpha_{\theta} - \frac{s}{G} \right \},
\end{equation}
because by \eqref{alpha}  and  $K(\theta) := \log (1-e^{-\theta (G-s)})$ it is $\log \{ \alpha_\theta - s/G \}  = \log \{(G \theta)^{-1} (1-e^{-\theta s})(1-e^{-\theta (G-s)} )\}  = - \log (G) - \log (\theta) + \log (1-e^{-\theta s} \big ) + K(\theta)$. Differentiation with respect to $\theta$ twice, results in  
$\ddot{\alpha}_\theta (\alpha_\theta - s/G) -  \dot{\alpha}_\theta^2 /(\alpha_\theta - s/G)^2) = \theta^{-2} - s^2 e^{-\theta s}/(1-e^{-\theta s})^2 + K''(\theta)$. Re-sorting the equality and multiplication by $(\alpha_\theta - s/G)$ yields \eqref{repkstrichstrich}.

For $0<s<G$ we may write
\begin{equation} \label{darstellung2eta}
	\eta(\theta  \vert  s, G)  = \left\{ \frac{  \dot{\alpha}_\theta^2 }{\alpha_\theta} - \frac{  \dot{\alpha}_\theta^2 }{\alpha_\theta - s/G}   \right \}  + \alpha_\theta H_2(\theta)  - \frac{1}{G}  (1-e^{-\theta s} ) H_3(\theta),
\end{equation}
with $H_k(\theta):= s^k   (1-e^{-\theta s})^{-k} e^{-\theta s} - \theta^{-k}$, for $k=2,3$.
The representation is obtained by inserting \eqref{repkstrichstrich} in \eqref{darstellungeta}. Some ratios are eliminated and what remains are
\begin{multline*}
	\eta(\theta \vert  s, G)  =  \frac{  \dot{\alpha}_\theta^2 }{\alpha_\theta} - \frac{  \dot{\alpha}_\theta^2 }{\alpha_\theta - s/G}   - \frac{1}{\theta^2} \alpha_\theta + \frac{s^2 e^{-\theta s}}{(1-e^{-\theta s})^2} \left \{ \alpha_{\theta} - \frac{s}{G} \right \} \\ + \frac{1}{G\theta^3}  (1- e^{-\theta s}  )  
	=  \left\{ \frac{  \dot{\alpha}_\theta^2 }{\alpha_\theta} - \frac{ \dot{\alpha}_\theta^2 }{\alpha_\theta - s/G}   \right \} + \alpha_\theta  \left\{ \frac{s^2 e^{-\theta s}}{(1-e^{-\theta s})^2} - \frac{1}{\theta^2}  \right \} \\ - \frac{1}{G} (1-e^{-\theta s}) \left\{ \frac{s^3 e^{-\theta s}}{(1-e^{-\theta s})^3} - \frac{1}{\theta^3}  \right\}
\end{multline*}
and the $H_k(\theta)$ are suitably defined for \eqref{darstellung2eta}. 

Now note that for the first summand (in brackets) in \eqref{darstellung2eta}, it holds $\dot{\alpha}_\theta^2/\alpha_\theta -  \dot{\alpha}_\theta^2 /(\alpha_\theta - s/G) \leq 0$  	
for all $\theta \in \Theta$ and $0<s< G$. The latter follows from the fact that obviously  $\dot{\alpha}_\theta^2 \geq 0$ and by \eqref{alpha} holds across $\Theta = [\eps, 1/ \eps]$ that 
$0 < \alpha_\theta - s/G< \alpha_\theta  \; \Rightarrow \; \alpha_\theta^{-1} < (\alpha_\theta - s/G)^{-1} \; \Rightarrow \;  \alpha_\theta^{-1} - (\alpha_\theta - s/G)^{-1} < 0$.

Note finally that summing the second and third summands in \eqref{darstellung2eta}, for $0< s <G$, $H_1(\theta):= s e^{-\theta s} /(1-e^{-\theta s})  - 1/ \theta$ and all $\theta \in \Theta$ yields after a short calculation:
\begin{equation} \label{lemdelta}
	\Delta:=\underbrace{ \frac{s}{G \theta}  (1-e^{-\theta (G-s)}  ) }_{\text{$>0$}} \underbrace{H_1(\theta)}_{\text{$<0$}} - \underbrace{\frac{s}{G \theta^2} (1-e^{-\theta s}  ) e^{-\theta(G-s)}}_{\text{$>0$}} \underbrace{ \left \{ \frac{1}{s} H_1(\theta) + 1 \right \}}_{\text{$>0$}} <0 
\end{equation}

\qed

\vspace*{0.7cm}
\noindent \textbf{Declarations}:
The authors declare that they have no conflict of interest.

\vspace*{0.3cm}
\textbf{Acknowledgment}:
We thank Dr. Claudia Gregor-Lawrenz (BaFin, Bonn) for continuous support on the topic, e.g. by contributing earlier data in the LTRC-design. We thank Dr. Florian K\"ohler (Destatis, Hannover) for supply of the data, Dr. Wolfram Lohse (Görg, Hamburg) for support in the data acquisition process, Simon Rommelspacher (Destatis, Wiesbaden) for advise on the measurement definitions. The financial support from the Deutsche Forschungsgemeinschaft (DFG) of R. Wei\ss bach is gratefully acknowledged (Grant 386913674 ``Multi-state, multi-time, multi-level analysis of health-related demographic events: Statistical aspects and applications'').

\section{Supplement}

If in a calculation only one individual is concern, the index will be dropped, $(X, T)$ instead of $(X_i, T_i)$, $(Y, L, R)$ instead of $(Y_i, L_i, R_i)$, $(X^{obs},T^{obs})$ instead of $(X^{obs}_j,T^{obs}_j)$, $(Y^{obs}, L^{obs}, R^{obs})$ instead of $(Y^{obs}_j, L^{obs}_j, R^{obs}_j)$, $(Y^*, L^*, R^*)$ instead of $(Y_l^*, L_l^*, R_l^*)$, $(Y^0, L^0, R^0)$ instead of $(Y_i^0, L_i^0, R_i^0)$.

\subsection{Derivation of observation probability $\alpha_\theta$ (Formula \ref{alpha})}

A short calculation yields that it is $(X, T) \in D_0$ for $D_0 :=\big\{(x,t)\big| 0 < t \leq x \leq t+s \leq G\big\} \cup \big\{(x,t) \big| -s \leq t \leq 0\big\}$ if and only if  $(Y, L, R)\, \in \, D$. Therefore $\alpha_\theta   = \p_\theta \big((Y, L, R) \in D \big) =  \p_\theta \big((X, T) \in D_0 \big)$. The set $D_0$ is the function of two disjoint sets so that:
\begin{align*}
	\alpha_\theta & = \p_\theta \big ( -s\leq T \leq 0 \big ) + \p_\theta \big( 0 < T \leq X \leq T+s \leq G  \big )  =: P_1 + P_2 
\end{align*}
According to Assumptions \ref{A2:Fiete}, obviously $P_1 = s/G$. The property of the conditional expectations yields (with $\E$ as expectation with respect to $\p$)
\begin{align*}
	P_2 & =  \E_\theta  \big( \chi_{(0, G-s]}(T) \, \chi_{[T , T + s]}(X)  \big ) 
	= \E_\theta \big ( \E_\theta \big [ \chi_{(0, G-s]}(T) \, \chi_{[T , T + s]}(X) \, \big | \, T \big ] \big ) \\
	& = \E_\theta \big ( \chi_{(0, G-s]}(T) \, \E_\theta \big [ \chi_{[T , T + s]}(X) \, \big | \, T \big ] \big ) \\
	& = \E_\theta \big ( \chi_{(0, G-s]}(T) \, \p_\theta \big [ T \leq X \leq T + s \, \big | \, T  \big ] \big ) .
\end{align*}
In the third equality, note that $ \chi_{(0, G-s]}(T)$ is measurable with respect to $\sigma(T)$. Again using Assumptions \ref{A2:Fiete} it is $P_2 = \E_\theta ( \chi_{(0, G-s]}(T)  \{ 1-\exp(-\theta (T +s)) - 1+\exp(-\theta T )\}  ) = G^{-1} \int_{0}^{G-s}  \{ e^{-\theta u} - e^{-\theta (u+s)} \}  \diff u 
= G^{-1}  (1-e^{-\theta s}  ) \int_{0}^{G-s} e^{-\theta u}  \diff u   = G^{-1} (1-e^{-\theta s} ) \frac{1- e^{-\theta (G-s)}}{\theta}$. 
\qed

%\newpage 

%%%%%%%%%%%%%%%%%%%%%%%%%%%%%%%%%%%%%%%%%%%%%%%%%%%%%%%%%%%%%%%%%%%%%%%%%%%%%%%%%%%%%%%%%%%%%%%%%%%%%%%%%%%%%%%%%

\subsection{Derivation of observed PDF $f_{\theta}(y, l, r)$ (Formula \ref{ylrdichte})}

The data is a truncated point process and equivalent to a mixed empirical process with binomially distributed number of summands. By approximating the latter by a Poisson distributed random variable a Poisson process approximates the data and the distribution of its atoms $(X^{obs}_j,T^{obs}_j)$ is the distribution of $(X_i, T_i)$ conditional on observability, i.e. on $T_i \le X_i \le T_i +s$ \cite[see][]{weiswied2021}.   
Written in terms $( Y_i, L_i, R_i)$ and $(Y^{obs}_j, L^{obs}_j, R^{obs}_j)$ for $B \in \mathcal B$ it is
\begin{equation*} \p_{\theta_0}\big ( ( Y, L, R) \in B \, \big | \, ( Y, L, R) \in D  \big )
	=\p_{\theta_0}\big ( ( Y, L, R) \in B \cap D \big ) \, \big / \,  \p_{\theta_0}\big ( ( Y, L, R) \in D  \big ) 
\end{equation*}
and it follows 
\begin{eqnarray*}
	F_{\theta_0}(y,l,r)  & = & \p_{\theta_0} \big ( (Y , L, R) \in [-(G-s) , y]\times (l,r) \cap D\big ) \, \big / \, \p_{\theta_0} \big (Y , L, R) \in D \big ) \\
	& = &  \p_{\theta_0} \big ( Y \in  [-(G-s) , y] \cap [0,s], 
	(L, R) \in  \{(l,r)\} \\
	& &  \cap \{(0,0),(0,1),(1,0)\} \big ) \, \big / \, \alpha_{\theta_0}\\
	& = & \p_{\theta_0}\big( Y\in  [-(G-s) , y] \cap [0,s], (L , R) \in \{(l,r)\} \cap D^o \big ) \, \big / \, \alpha_{\theta_0},
\end{eqnarray*}
where $D^o:=\{(0,0),(0,1),(1,0)\}$. Hence, the distribution of observation $\{(Y^{obs}_j, L^{obs}_j, R^{obs}_j)\}_{j\leq m}$ on its support $S$ is given 	for $(y,l,r)\in S:= (0,s] \times \{(0,0) \} \cup [0,s] \times \{(0,1) \} \cup (-(G-s),s] \times  \{(1,0)  \} \cup (s, \infty) \times \{(1,1)\}$ by
\begin{equation}	 \label{Lemma3}
	F_{\theta_0}(y, l, r)  
	= \frac{1}{\alpha_{\theta_0}} \,\p_{\theta_0}\big( Y \in [-(G-s), y]\cap [0,s],\, (L , R) \in \{(l,r)\} \cap D^o \big).
\end{equation}
From the distribution of $(Y^{obs}_j, L^{obs}_j, R^{obs}_j)$ its density (with respect to the product measure of Lebesgue measure and two count measures) derives. 	The four disjoint outcomes  $\{(L, R)=(i,j)\}$ for $i,j \in \{0,1\}$ will now be considered separately. For $y < 0$ and  $y>s$, $F_{\theta_0}$ remains on constant level to that related parts vanish after differentiation with respect to $y$. Hence let be without loss of generality $y\in [0,s]$. Due to \eqref{Lemma3} it is
\begin{align*}
	F_{\theta_0}(y,0,0) & =  \frac{1}{\alpha_{\theta_0}} \, \p_{\theta_0} \big (0\leq Y \leq y,\, (L , R) = (0,0) \big ) 
\end{align*}
and according to the definitions of $Y, L$ and $R$ the above probability results from the distribution of $X$ and $T$ as follows. 
\begin{align*}
	\alpha_{\theta_0} F_{\theta_0}(y,0,0) & = \p_{\theta_0} \big ( X \leq y, \, T \leq 0, \, X \leq T + s\big ) \\
	& =  \E_{\theta_0}  \big( \chi_{[-s, 0]}(T) \, \chi_{[0 ,\, \min(y, \, T + s)]}(X)  \big ) \\
	& =  \E_{\theta_0}  \big( \E_{\theta_0} \big [ \chi_{[-s, 0]}(T) \, \chi_{[0 ,\, \min(y, \, T + s)]}(X) \, \big | \, T \big] \big )  \\
	& =  \E_{\theta_0}  \big( \chi_{[-s, 0]}(T) \,  \E_{\theta_0} \big [ \chi_{[0 ,\, \min(y, \, T + s)]}(X) \, \big | \, T \big] \big )  
\end{align*}
The last equation uses that $\chi_{[-s, 0]}(T)$ is measurable with respect to $\sigma(T)$ and can hence be written before the inner conditional expectation. By the exponential distribution of $X$ with parameter $\theta_0$ it is for $g(t):=\min(y, t+s)$ almost surely 
$\E_{\theta_0} [ \chi_{[0,g(T)]}(X)  \vert  T ] = F^{\text{Exp}} (g(T)  \vert  \theta_0 ) = 1-e^{-\theta_0 g(T)}$. Together with the uniform distribution of $T$ on $[-s, G-s]$ it is  
\begin{align*}
	\alpha_{\theta_0}  F_{\theta_0}(y,0,0) & =  \E_{\theta_0}  \big( \chi_{[-s, 0]}(T) \big \{1-e^{-\theta_0 \min(y, \, T + s)} \big \}  \big )  \\
	& = \frac{1}{G} \int_{-s}^0 \left \{ 1- e^{-\theta_0\min(y, \, u + s) }\right \} \diff u. 
\end{align*}
Now, note that $\min (y, u+s)=u+s$ holds if and ony if $u \leq -s +y$ and the regions of integration are separated by $y\in [0,s]$ adequately and it is  
\begin{align*}
	\alpha_{\theta_0}  F_{\theta_0}(y,0,0)  & = \frac{s}{G} - \frac{1}{G}\int_{-s}^{-s+y} e^{-\theta_0( u + s)} \diff u - \frac{1}{G} \int_{-s+y}^0  e^{-\theta_0y}\diff u \\ 
	& = \frac{s}{G} - \frac{1}{G}\int_{0}^{y} e^{-\theta_0 v} \diff v + \frac{1}{G}  \int_{0}^{y-s}  \diff u \cdot e^{-\theta_0y} \\ 
	& = \frac{s}{G} -  \frac{1-e^{-\theta_0 y}}{G \theta_0} + \frac{1}{G} (y-s)e^{-\theta_0 y}.
\end{align*}
Differentiation with respect to $y$ yields $f_{\theta_0}(y, 0, 0) = (\alpha_{\theta_0} G)^{-1} \{ -e^{-\theta_0 y} + e^{-\theta_0 y} - \theta_0(y-s) e^{-\theta_0 y} \} 
=  (\alpha_{\theta_0} G)^{-1} e^{-\theta_0 y}  \{ \theta_0(s-y) \}$. 
Similar for the cases $(l,r)=(0,1), (1,0), (1,1)$, it is 
\begin{multline*}
	F_{\theta_0}(y,1,1)  = \p_{\theta_0}  (Y \leq y, L = 1, R = 1  \vert  (Y, L, R) \in D  ) \\ 
	= \p_{\theta_0} (Y \leq \min(y,s), (L , R) = (1,1) ,(L, R) \in \{(0,0),(0,1),(1,0)\} )  /  \alpha_{\theta_0} 
	= 0, 
\end{multline*}
so that immediately $f_{\theta_0}(y,1,1)\equiv 0$. For the remaining two $(l,r)$, for $y\in [0,s]$, again the equality $\alpha_{\theta_0} F_{\theta_0}(y,l,r)  = \p_{\theta_0} \big (Y \leq y, L = l, R = r \big )$ holds.
Therefore we have for $(l,r)=(0,1)$ 
\begin{align*}
	\alpha_{\theta_0}  F_{\theta_0}(y, 0, 1) & = \p_{\theta_0} \big ( T + s \leq y, \, T \leq 0, \, T + s < X \big ) \\
	& = \E_{\theta_0} \big ( \chi_{[-s, -s+y]}(T) \, \chi_{(T + s,\infty)}(X) \big ) \\
	& = \E_{\theta_0} \big ( \chi_{[-s, -s+y]}(T) \, \E_{\theta_0} [ \chi_{(T + s,\infty)}(X) \, | \, T] \big ) \\
	& = \frac{1}{G}\int_{-s}^{-s+y} \big \{1 - 1 + e^{-\theta_0 (u+s)} \big\} \diff u 
	= \frac {1}{G} \int_0^y e^{-\theta_0 v} \diff v 
\end{align*}
and differentiation with respect to $y$ yields $f_{\theta_0}(y,0,1)= (\alpha_{\theta_0}G)^{-1} e^{-\theta_0 y}$.
Note finally that for $(l,r)=(1,0)$ it is  
\begin{align*}
	\alpha_{\theta_0} F_{\theta_0}(y, 1,0) & = \p_{\theta_0} \big ( X - T \leq y, \, T > 0, \, X \leq T + s \big ) \\
	& = \E_{\theta_0} \big ( \chi_{(0, G-s]}(T) \, \chi_{[0, T + y]}(X) \big ) \\
	& = \E_{\theta_0} \big ( \chi_{(0, G-s]}(T) \, \E_{\theta_0} [ \chi_{[0, T + y]}(X) \, | \, T] \big ) \\
	& = \frac{1}{G} \int_{0}^{G-s} \big \{1-e^{-\theta_0(u+y)} \big \} \diff u \\
	& = \frac{G-s}{G} - \frac{1}{G} e^{-\theta_0 y} \int_{0}^{G-s}e^{-\theta_0 u} \diff u \\
	& = \frac{G-s}{G} - \frac{1}{G} e^{-\theta_0 y} \frac{1-e^{-\theta_0 (G-s)}}{\theta_0}
\end{align*} 
so that $f_{\theta_0}(y,1,0)=(\alpha_{\theta_0} G)^{-1} e^{-\theta_0 y} \{ 1-e^{-\theta_0 (G-s)} \}$.
\qed

%\newpage 

\subsection{Derivation of likelihood approximation (Formula \ref{likehood})}

Let denote $\eps_P$ the Dirac-measure in point $P$. The data are now described by the truncated point process 
$N_{n,D} (\cdot ):= \sum_{i=1}^n \eps_{(Y_i, L_i, R_i)}(\cdot \cap D)$, which equals in distribution a Binomial process according \citet[][Theorem 1.4.1]{reiss1993}. It is a mapping $\Omega^n \to \mathbb M(S, \mathcal B)$ with the point measure over $(S, \mathcal B)$ as image, and $\mathcal B$ as $\sigma$-field over $S$. Note that $N_{n,D}(B):\Omega^n \to \{0, \dots , n\}$ with $\omega \mapsto N_{n,D}^\omega(B)$ for each $B\in \mathcal B$ is the random number of $\{(Y_i, L_i, R_i)\}_{1\leq i \leq n}$, that belong to $B \cap D$. Denote the intensity measure by $\nu_{n,D}$ and we will approximate the density of $N_{n,D}$, i.e. the likelihood, by that of a Poisson process $N_n^*$ which has an identical intensity measure. The likelihood approximation can than be maximized in $(n, \theta)$. Both processes are close in Hellinger distance as $\alpha_{\theta_0}$ is small \cite[][Theorem 1.4.2]{reiss1993}, for instance if $s \ll G$.   

Define $Z_n$ to be Poisson-distributed with parameter $n\alpha_{\theta_0}$ and to be independent of a sequence of independent and identically distributed random vectors $\{(Y_l^*, L_l^*, R_l^*)\}_{l\geq 1}$, each with density \eqref{ylrdichte}. Define $N_n^* (\cdot ):= \sum_{l=1}^{Z_n} \eps_{( Y_l^*, L_l^*, R_l^*)}(\cdot)$. 
\begin{lem}\label{lemma3}
	The intensity measure $\nu_n^*$ of $N_n^*$ is equal to  
	$\nu_{n,D}$.
\end{lem}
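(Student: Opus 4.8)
The plan is to compute both intensity measures directly from the defining relation $\nu(B)=\E_{\theta_0} N(B)$, $B\in\mathcal B$, and to check that they agree on every such $B$.

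I would first evaluate $\nu_{n,D}$. Since $N_{n,D}(B)=\sum_{i=1}^n \eps_{(Y_i,L_i,R_i)}(B\cap D)=\sum_{i=1}^n \chi_{B\cap D}(Y_i,L_i,R_i)$ is a fixed finite sum of indicators, linearity of expectation together with the fact that the $(Y_i,L_i,R_i)$ are identically distributed (being deterministic transforms of the i.i.d.\ pairs $(X_i,T_i)$) gives
\begin{equation*}
  \nu_{n,D}(B)=\sum_{i=1}^n \p_{\theta_0}\big((Y_i,L_i,R_i)\in B\cap D\big)=n\,\p_{\theta_0}\big((Y,L,R)\in B\cap D\big).
\end{equation*}
By construction $f_{\theta_0}$ in \eqref{ylrdichte} is the density of $(Y,L,R)$ conditioned on the observability event $\{(Y,L,R)\in D\}$, which has probability $\alpha_{\theta_0}$; hence
\begin{equation*}
  \nu_{n,D}(B)=n\,\alpha_{\theta_0}\int_{B\cap D} f_{\theta_0}(y,l,r)\diff\mu(y,l,r),
\end{equation*}
where $\mu$ denotes the product of Lebesgue measure with the two counting measures.

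I would then evaluate $\nu_n^*$. Here lies the only genuine obstacle: $N_n^*$ carries a \emph{random} number $Z_n$ of atoms, so the expectation cannot simply be pulled through a sum of fixed length. I would condition on $Z_n$ and exploit that $Z_n$ is independent of the i.i.d.\ marks $(Y_l^*,L_l^*,R_l^*)$. Because every summand is nonnegative, Tonelli's theorem legitimises exchanging expectation and the random-length sum, and the tower property then yields the compound (Wald) identity
\begin{equation*}
  \nu_n^*(B)=\E_{\theta_0}\Big[\sum_{l=1}^{Z_n}\chi_B(Y_l^*,L_l^*,R_l^*)\Big]=\E[Z_n]\,\p_{\theta_0}\big((Y^*,L^*,R^*)\in B\big)=n\,\alpha_{\theta_0}\int_B f_{\theta_0}\diff\mu,
\end{equation*}
using $\E[Z_n]=n\,\alpha_{\theta_0}$ and that the marks have density $f_{\theta_0}$.

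Finally, since $f_{\theta_0}$ vanishes outside $D$, the integral over $B$ coincides with the integral over $B\cap D$, so the two displays are identical for every $B\in\mathcal B$; this is exactly $\nu_n^*=\nu_{n,D}$. I expect no further difficulty beyond the care needed for the random-length sum in $N_n^*$; everything else is bookkeeping with the already-derived density \eqref{ylrdichte}.
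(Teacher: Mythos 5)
Your proposal is correct and follows essentially the same route as the paper: compute $\nu_{n,D}(B)=n\,\p_{\theta_0}((Y,L,R)\in B\cap D)$ by linearity and $\nu_n^*(B)=\E[Z_n]\,\p_{\theta_0}((Y^*,L^*,R^*)\in B)$ by the compound (Wald) identity using the independence of $Z_n$ from the marks, then match the two via the fact that the marks are distributed as $(Y,L,R)$ conditioned on $D$. The only cosmetic difference is that you pass through the explicit density $f_{\theta_0}$ and its vanishing outside $D$, where the paper works directly with the conditional probability $\p_{\theta_0}((Y,L,R)\in B\mid (Y,L,R)\in D)=\p_{\theta_0}((Y,L,R)\in B\cap D)/\alpha_{\theta_0}$.
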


\begin{proof} For $B\in \mathcal B$ it is $\nu_n^*(B)  = \E_{\theta_0} N_n^*(B) = \E_{\theta_0}  \sum_{l=1}^{Z_n} \eps_{( Y_l^*, L_l^*, R_l^*)}(B)$.
	Due to the independence of $Z_n$, the distribution $Z_n \sim \text{Poi}(n\alpha_{\theta_0})$ and the identical distributions of the $( Y_l^*, L_l^*, R_l^*)$ follows 
	\begin{eqnarray*}
		\nu_n^*(B)& = & \E_{\theta_0} Z_n \, \E_{\theta_0} \eps_{( Y^*, L^*, R^*)}(B) \\
		& = & n\alpha_{\theta_0} \,\p_{\theta_0}\big (( Y^*, L^*, R^*) \in B \big ) =n\alpha_{\theta_0} \, \p_{\theta_0}\big (( Y^{obs}, L^{obs}, R^{obs}) \in B \big ).
	\end{eqnarray*}
	And furthermore $\p_{\theta_0}\big ( ( Y^{obs}, L^{obs}, R^{obs})\in B \big ) = \p_{\theta_0}\big (( Y, L, R) \in B\, | \, ( Y, L, R) \in D\big )$ so that 
	\begin{align*}
		\nu_n^*(B)  & = n\alpha_{\theta_0} \cdot \p_{\theta_0}\big (( Y, L, R) \in B \, \big | \, ( Y, L, R) \in D \big ) \\
		& = n\alpha_{\theta_0} \cdot \p_{\theta_0}\big (( Y, L, R) \in B \cap D \big ) \big / \alpha_{\theta_0} 
		= n \cdot \p_{\theta_0}\big (( Y, L, R) \in B \cap D \big ) .
	\end{align*}
	Finally, the last expression is $\nu_{n,D}(B)$, because $\nu_{n,D}(B)  = \E_{\theta_0} N_{n,D}(B) = \E_{\theta_0} \sum_{i=1}^n \eps_{(Y_i, L_i, R_i)}(B \cap D) = n \cdot \p_{\theta_0}\big (( Y, L, R) \in B \cap D \big )$.
\end{proof}

In order to derive the Radon-Nikodym derivative of $N_n^*$ the dominating measure is chosen to be that of a parameter-independent Poisson process $N_0$. 
To this end, define $(Y_i^0, L_i^0, R_i^0)$ with $Y_i^0 \sim \text{Uni}([0,s])$ and thereof independent $(L_i^0, R_i^0) \sim \text{Uni}( \{(0,0),(0,1),(1,0)\} )$ for $i=1,2,\dots$ as well as $Z_0 \sim \text{Poi}(3s)$, again independent of the both former. Then $N_0 (\cdot) := \sum_{i=1}^{Z_0} \eps_{(Y_i^0, L_i^0, R_i^0)}(\cdot)$ is a 
Poisson process with intensity measure $\nu_0$ and for the sets $B=B_1 \times B_2 \subseteq S$ 
\begin{align*}
	\nu_0 (B) & = \E_{\theta_0} Z_0 \cdot \p_{\theta_0}\big( (Y^0, L^0, R^0) \in B\big )  \\
	& = 3s \cdot \frac{1}{s} \, \lambda_{[0,s]}(B_1) \cdot \frac{1}{3}\,  \text{card}\big(B_2 \cap \{(0,0),(0,1),(1,0)\}\big) \\
	& =  \lambda_{[0,s]}(B_1) \cdot \big \{ \eps_{(0,0)}(B_2) + \eps_{(0,1)}(B_2) +\eps_{(1,0)}(B_2) \big \} =:  \lambda_{[0,s]}(B_1) \cdot \mu_0(B_2),
\end{align*}
where $\lambda_{[0,s]}$ is the Lebesgue measure limited to $[0,s]$. Here $B_1$ denotes an  interval and $B_2$ a subset of $\{0,1\}^2$, so that $B_1 \times B_2 \subset S$. 
\begin{lem}
	The measure $\nu_0$ dominates $\nu_n^*$.
\end{lem}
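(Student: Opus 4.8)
The plan is to establish the absolute continuity $\nu_n^* \ll \nu_0$ by exhibiting an explicit Radon--Nikodym density, which is in fact stronger than the bare domination claimed. First I would recall from Lemma \ref{lemma3} and its proof that for every $B \in \mathcal B$ one has $\nu_n^*(B) = n\alpha_{\theta_0}\,\p_{\theta_0}\big((Y^*, L^*, R^*) \in B\big)$, where the generic atom $(Y^*, L^*, R^*)$ carries the density $f_{\theta_0}$ of \eqref{ylrdichte}.

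The decisive observation is that the reference measure against which $f_{\theta_0}$ was defined --- the product of Lebesgue measure on $[0,s]$ with the counting measure on the three observable strata $\{(0,0),(0,1),(1,0)\}$ --- is precisely $\nu_0$. Indeed, by construction $\nu_0(B_1\times B_2) = \lambda_{[0,s]}(B_1)\,\mu_0(B_2)$ with $\mu_0 = \eps_{(0,0)}+\eps_{(0,1)}+\eps_{(1,0)}$, so that $\nu_0 = \lambda_{[0,s]}\otimes\mu_0$ is exactly the dominating measure of $f_{\theta_0}$. Hence I can rewrite $\p_{\theta_0}((Y^*,L^*,R^*)\in B) = \int_B f_{\theta_0}\, d\nu_0$ and obtain the density representation $\nu_n^*(B) = \int_B n\alpha_{\theta_0} f_{\theta_0}\, d\nu_0$, i.e. the Radon--Nikodym derivative $d\nu_n^*/d\nu_0 = n\alpha_{\theta_0} f_{\theta_0}$.

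From this representation the statement is immediate: whenever $\nu_0(B)=0$, the integral $\int_B n\alpha_{\theta_0} f_{\theta_0}\,d\nu_0$ vanishes, and therefore $\nu_n^*(B)=0$, which is exactly $\nu_n^* \ll \nu_0$. The only point demanding care is the measure-theoretic bookkeeping on the support. I would note that $f_{\theta_0}$ is supported within $D = [0,s]\times\{(0,0),(0,1),(1,0)\}$ --- in particular $f_{\theta_0}(\cdot,1,1)\equiv 0$ and the density vanishes for $y\notin[0,s]$ --- so that no mass of $\nu_n^*$ is placed on the $(1,1)$-stratum or outside $[0,s]$, precisely where $\nu_0$ is null; thus the supports match and, since $f_{\theta_0}$ is bounded on $[0,s]$, the density is finite and the representation is rigorous. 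I do not anticipate a genuine obstacle here: the entire content of the lemma is the identification $\nu_0 = \lambda_{[0,s]}\otimes\mu_0$ together with the recognition that this is the very measure with respect to which $f_{\theta_0}$ was defined, after which domination follows from the nonnegativity of the density.
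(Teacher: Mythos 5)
Your proof is correct, and it rests on the same underlying fact as the paper's: that $\p_{\theta_0}((Y^{obs},L^{obs},R^{obs})\in B)$ is an integral of $f_{\theta_0}$ against $\lambda_{[0,s]}\otimes\mu_0=\nu_0$. The difference is one of organization. The paper proves bare domination first, by a case analysis on $\nu_0$-null product sets $B=B_1\times B_2$ (either $\lambda_{[0,s]}(B_1)=0$ or $B_2=\{(1,1)\}$, and in both cases $\sum_{(l,r)\in B_2}\int_{B_1}f_{\theta_0}(y,l,r)\diff y=0$), and only in the paragraph \emph{after} the lemma identifies $h_{\theta_0}=n\alpha_{\theta_0}f_{\theta_0}$ as the Radon--Nikodym derivative. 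You front-load that identification: you recognize $\nu_0$ as precisely the reference measure with respect to which $f_{\theta_0}$ was derived, write $\nu_n^*(B)=\int_B n\alpha_{\theta_0}f_{\theta_0}\diff\nu_0$, and read off domination from the nonnegativity of the integrand. This buys two things: it proves the stronger statement (explicit density, not just absolute continuity, so the subsequent appeal to the Radon--Nikodym theorem becomes a verification rather than an existence argument), and it handles arbitrary measurable null sets $B$ at once, whereas the paper's case distinction is written only for product sets and would strictly speaking need a monotone-class extension. Your attention to the support (the $(1,1)$-stratum and $y\notin[0,s]$ carrying no $\nu_n^*$-mass) covers exactly the point the paper's second case addresses.
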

\begin{proof} We have to show that with $\nu_0 (B) = 0$ also $\nu_n^*(B)=0$. Let be $\nu_0 (B) = 0$ for a $B=B_1\times B_2 \subseteq S$ as above. With the previous result for $\nu_0$ it follows directly $\lambda_{[0,s]}(B_1)=0$ or $B_2 = \{(1,1)\}$. Now, with Lemma \ref{lemma3}  $\p_{\theta_0}\big ((Y^{obs},L^{obs},R^{obs}) \in B \big ) = \sum_{(l,r)\in B_2 \cap \{0,1\}^2} \int_{B_1} f_{\theta_0}(y,l,r) \diff y = 0$ holds now in both cases
	and therefore also $\nu_n^*(B) = n\alpha_{\theta_0} \cdot \p_{\theta_0}\big((Y^{obs},L^{obs},R^{obs}) \in B \big ) =0$.
\end{proof}

The required Radon-Nikodym derivative of $\nu_n^* = \nu_{n,D}$ with respect to $\nu_0$ is a function $h_{\theta_0}:S\to \R_0^+$ so that $\forall B\subseteq S$ holds $\nu_{n,D}(B)= \int_B h_{\theta_0} \diff \nu_0$ \cite[][Theorem 1.2.1.(i)]{reiss1993}.
For $B=B_1\times B_2\subseteq S$ holds as above by Fubini's theorem
\begin{align*}
	\nu_{n,D}(B)& = \int_{B} h_{\theta_0} \diff (\lambda_{[0,s]} \otimes \mu_0) \\
	&	= \sum_{(l,r)\in B_2\cap \{(0,0),(0,1),(1,0)\} } \int_{B_1\cap [0,s]} h_{\theta_0}(y, l,r) \diff y.
\end{align*}
Furthermore, due to Lemma \ref{lemma3} 
\begin{align*}
	\nu_{n,D}(B)  & = n\alpha_{\theta_0} \cdot \sum_{(l,r)\in B_2 \cap \{0,1\}^2} \int_{B_1} f_{\theta_0}(y,l,r)\diff y  \\
	& =  \sum_{(l,r)\in B_2\cap \{(0,0),(0,1),(1,0)\}} \int_{B_1\cap [0,s]}n\alpha_{\theta_0}\, f_{\theta_0}(y,l,r)\diff y
\end{align*}
follows and therefore $h_{\theta_0} := n\alpha_{\theta_0} f_{\theta_0}$ is the required function. 

The last arguments will show that, under the Assumptions \ref{A1:Fiete}-\ref{A4:Fiete}, the approximate likelihood is indeed \eqref{likehood}.

Note first that $N_n^*$ and $N_0$ are Poisson processes in the same measureable space $(S, \mathcal B)$. Their intensity measure $\nu_n^*$ and $\nu_0$ are finite because  
\begin{align*}
	\nu_n^* (S) & = n \cdot \p_{\theta_0}\big (( Y, L, R) \in S \cap D \big ) = n \alpha_{\theta_0} < \infty \\
	\nu_0 (S) & = \nu_0 \big ( (0, s] \times \{(0,0)\}\big ) + \nu_0 \big([0,s] \times \{(0,1)\}\big) \\
	& \qquad + \nu_0 \big([-(G-s), s] \times \{(1,0)\}\big) + \nu_0 \big((s, \infty)\times \{(1,1)\}\big) \\
	& = \lambda_{[0,s]}\big ((0,s] \big )\cdot \mu_0\big (\{(0,0)\}\big)  +  \lambda_{[0,s]}\big ([0,s]\big )\cdot \mu_0\big (\{(0,1)\}\big) \\
	& \qquad +  \lambda_{[0,s]}\big ([-(G-s), s]\big )\cdot \mu_0\big (\{(1,0)\}\big)  + \lambda_{[0,s]}\big ((s, \infty)\big )\cdot \mu_0\big (\{(1,1)\}\big) \\
	& = s\cdot 1+ s\cdot 1 + s\cdot 1 + 0 = 3s < \infty.
\end{align*}
Finally it is $h_{\theta_0} = n\alpha_{\theta_0} f_{\theta_0}$ a density of $\nu_{n,D} = \nu_n^*$ with respect to $\nu_0$ and by \citet[][Theorem 3.1.1]{reiss1993} a density of $\L(N_n^*)$ with respect to $\L(N_0)$ given by  
\begin{equation*}
	g(\mu)  = \exp\big(\nu_0(S)- \nu_n^*(S)\big) \prod_{l=1}^{\mu(S)} h_{\theta_0}( y^*_l,  l^*_l,  r^*_l) \\
	= \exp (3s - n\alpha_{\theta_0}) \prod_{l=1}^{\mu(S)} n \alpha_{\theta_0} f_{\theta_0}( y^*_l,  l^*_l,  r^*_l)
\end{equation*}
for $\mu$ of the form $\mu = \sum_{l=1}^{\mu(S)} \eps_{( y^*_l,  l^*_l,  r^*_l)}$. Equation \eqref{likehood} results by replacing the true parameter $\theta_0$ by the generic $\theta\in \Theta$ and the expression being evaluated at the observation $n_n^*$ (of $N_n^*$), \eqref{ylrdichte} supplies the form of $f_{\theta_0}$.
\qed

%%%%%%%%%%%%%%%%%%%%%%%%%%%%%%%%%%%%%%%%%%%%%%%%%%%%%%%%%%%%%%%%%%%%%%%%%%%%%%%%%%%%%%%%%%%%%%%%%%%%%%%%%%%%%%%%%

\subsection{Derivation of Properties a-e}

Define $K(\theta):=\log  (1-e^{-\theta (G-s)})$.
\begin{enumerate}
	\item[$(a)$] For $\Theta = [\eps, 1/\eps]$ and all $(y,l,r)\in S$ write 
	$m_\theta(y,l,r)  = \chi_D(y,l,r) \linebreak \{- \log \alpha_{\theta} -\theta  y +K(\theta)  l + \log(\theta) (1- l)(1-r)  \}$.
	Outside $D$, $\theta \mapsto m_\theta(y,l,r)\equiv 0$ and we may restrict to $(y,l,r)\in D$. Especially $y\in [0,s]$ and $l,r\in \{0,1\}$ are bounded on $D$ and $m_\theta$ is a linear combination of $\theta$-dependent functions. In view of \eqref{alpha}, $\alpha_\theta$ is continuous in $\theta$ and bounded away from zero on 
	$\Theta$, so that $\theta \mapsto \log \alpha_\theta$ remains continuous. With an equal argument $\theta \mapsto K(\theta)=\log ( 1- e^{-\theta (G-s)}  )$ and $\theta \mapsto \log \theta$ are continuous functions in  $\theta$ and finally $m_\theta(y,l,r)$ as combination thereof as well.
	\item[$(b)$] By the triangular inequality $\vert m_\theta(Y, L, R) \vert \leq \chi_D(Y, L, R)  \{ \vert \log \alpha_\theta \vert + \theta \vert Y \vert  + \vert K(\theta) \vert \vert L \vert  + \vert \log \theta \vert  \vert (1-L)(1-R) \vert \} 
	\leq \chi_D(Y, L, R) \{  \vert \log \alpha_\theta \vert + \theta s + \vert K(\theta) \vert  + \vert \log \theta \vert \}$, where the second inequality results from $(Y, L, R) \in D$. With a similar line of reasoning as for property $(a)$ the term in brackets depends continuously on $\theta$ and remains bounded on the compact interval $\Theta = [\eps, 1/\eps]$ by $C_0 < \infty$. Therefore $\chi_D(y,l,r) C_0$ is an integrable majorant, because it is $\E_{\theta_0} \chi_D (Y, L, R)  C_0 = \alpha_{\theta_0} C_0 < \infty$.
	
	\item[$(c)$] As $M_n(\theta)$ is an average of the $m_\theta(y,l,r)$, evaluated at the $(Y_i, L_i, R_i)$  ($ i=1,\dots,n$), $(c)$ is true if class $\{m_\theta : \theta \in \Theta\}$ is $P$-Glivenko-Cantelli \cite[see][Theorem 19.4]{vaart1998}. Because $\Theta = [\eps, 1/\eps]$ is compact, \citet[][Problem 19.8]{vaart1998} yields  the property  for $m_\theta$ by properties $(a)$ and $(b)$. Roughly, $(a)$ and $(b)$ do  induce a $L_1$-convergence of the $m_\theta$ and the compactness of $\Theta$ guarantees that the bracketing number of $\{m_\theta : \theta \in \Theta\}$, for given $\eps >0$, remains finite.
	
	\item[$(d)$] The continuity of $\theta \mapsto M(\theta)$ follows from similar reasons as the continuity of $\theta \mapsto m_\theta(y,l,r)$ in property $(a)$. Denote ${\E}^{obs}$ as expectation with respect to ${\p}^{obs}$. 
	$M(\theta)  = \E_{\theta_0} m_\theta(Y, L, R) 
	=  \E_{\theta_0} \chi_D(Y, L, R)  \{ - \log \alpha_{\theta}  -\theta  Y +K(\theta) L + \log(\theta)  (1- L)(1-R)  \}  = \alpha_{\theta_0}  \E_{\theta_0}  [  - \log \alpha_{\theta}  -\theta  Y +K(\theta)  L + \log(\theta)  (1- L)(1-R)   \vert  (Y, L, R) \in D  ]  = \alpha_{\theta_0}  {\E}^{obs}_{\theta_0}  [  - \log \alpha_{\theta}  -\theta  Y^{obs} +K(\theta)  L^{obs} + \log(\theta) (1-  L^{obs})(1- R^{obs}) ]  =  \alpha_{\theta_0} \{  - \log \alpha_{\theta}  -\theta {\E}^{obs}_{\theta_0} Y^{obs} +K(\theta) {\E}^{obs}_{\theta_0}  L^{obs} + \log(\theta) {\E}^{obs}_{\theta_0} (1-  L^{obs})(1- R^{obs}) \}$ (where ${\E}^{obs}$ is the expectation with respect to ${\p}^{obs}$). With $Y^{obs}\in [0,s]$ and $L^{obs},R^{obs}\in \{0,1\}$ all expectations are finite. To be more precise, all constants are positive and $M(\theta)$ is again a linear combination of functions which are continuous in $\theta$, so that $\theta \mapsto M(\theta)$ remains continuous.
	
	\item[$(e)$] Start with the following inequality (that uses the above inequality and the linearity of the expectation):
	\begin{multline}\label{klback} 
		\E_{\theta_0} \chi_D \log \left(\frac{f_{\theta_0}}{f_{\theta}}(Y, L, R) \right)  =  \alpha_{\theta_0} \E_{\theta_0}\left [ \log \left (\frac{f_{\theta_0}}{f_{\theta}}(Y, L, R) \right ) \right. \\
		\left. \vert  (Y, L, R) \in D \right ]  \\
		=  \alpha_{\theta_0} {\E}^{obs}_{\theta_0} \left [ \log \left (\frac{f_{\theta_0}}{f_{\theta}}(Y^{obs}, L^{obs}, R^{obs}) \right ) \right ] \ge 0 
	\end{multline} 	
	
	Without $\alpha_{\theta_0}>0$, the expression \eqref{klback} is the Kulback-Leibler distance between $f_{\theta_0}$ and $f_{\theta}$ and by the basic property of a loss, it is minimal (and zero) if and only if $f_{\theta_0} \equiv f_{\theta}$. By the identification given in Lemma \ref{lemindettrunc}, the latter is true if an only if $\theta_0=\theta$.	
	
	Using the first two equalities yields that $\theta_0$ is also the unique minimizer of $\theta  \mapsto \E_{\theta_0} \chi_D\cdot ( \log f_{\theta_0}  - \log f_{\theta} )= \E_{\theta_0} (m_{\theta_0}- m_\theta) $. Because $\E_{\theta_0} (m_{\theta_0})$ is constant with respect to $\theta$,  $\theta_0$ hence maximizes $M(\theta)= \E_{\theta_0}m_\theta$ uniquely.
	
\end{enumerate}

\qed

%\newpage 

%%%%%%%%%%%%%%%%%%%%%%%%%%%%%%%%%%%%%%%%%%%%%%%%%%%%%%%%%%%%%%%%%%%%%%%%%%%%%%%%%%%%%%%%%%%%%%%%%%%%%%%%%%%%%%%%%

\subsection{Proof of Corollaries \ref{korro1} and \ref{korro2}}

It is $\partial_\theta^2 m_\theta(y,l,r)$ bounded by a integrable majorant $\ddot{m}(y,l,r)$ by the following argument. Define $K''(\theta):= -(G-s)^2(e^{\theta (G-s)}-1)^{-1}  -(G-s)^2(e^{\theta (G-s)}-1)^{-2}$, then
it is $\partial_\theta^2 m_\theta(y,l,r) = \chi_D(y,l,r) \{ -\ddot{\alpha}_\theta \alpha_\theta^{-1} +  \dot{\alpha}_\theta^2 \alpha_\theta^{-2} + K''(\theta) l - \theta^{-2} (1-l)(1-r) \}$. Hence:
\begin{align*}
	|\partial_\theta^2 m_\theta(y,l,r)|& \leq \chi_D(y,l,r) \left \{ \frac{ \vert \ddot{\alpha}_\theta \vert}{\alpha_\theta} + \frac{ \vert \dot{\alpha}_\theta \vert^2 }{\alpha_\theta^2}  + \vert K''(\theta) \vert \vert l \vert + \frac{1}{\theta^2}  \vert(1-l)(1-r)\vert\right \} \\
	& \leq \chi_D(y,l,r)  \left \{\frac{ \vert \ddot{\alpha}_\theta \vert }{\alpha_\theta} + \frac{ \vert \dot{\alpha}_\theta \vert^2 }{\alpha_\theta^2}  + \vert K''(\theta) \vert + \frac{1}{\theta^2} \right \}
\end{align*}
With the same arguments as in the proof for \eqref{lemma9} (see Section \ref{b6}), the denominators in the expressions in brackets are positive (recall $G > s$), the expression itself is continuous, and especially the last expression is bounded over $\Theta=[\eps, 1/\eps]$ by a constant $C<\infty$. Hence it is  $\ddot{m}(y,l,r):=\chi_D(y,l,r)\cdot C$ the integrable majorant.

Both corollaries follow directly. 
\qed 
%\newpage 

%%%%%%%%%%%%%%%%%%%%%%%%%%%%%%%%%%%%%%%%%%%%%%%%%%%%%%%%%%%%%%%%%%%%%%%%%%%%%%%%%%%%%%%%%%%%%%%%%%%%%%%%%%%%%%%%%

\subsection{Proof of Inequality \ref{lemma9}} \label{b6}

Note first that $D^1_\theta m_\theta(y,l,r) = \chi_D(y,l,r)  \{ -\dot{\alpha}_\theta \alpha_\theta^{-1} - y + K'(\theta) l + \theta^{-1} (1-l)(1-r) \}$ with $K'(\theta) := (G-s) / (e^{\theta(G-s)}-1)$ is continuous on $\Theta=[\eps, 1/\eps]$, because all denominators are bounded away from zero and $\dot{\alpha}_\theta$ continuously depends on $\theta$ (see \eqref{alpha}). Furthermore holds
\begin{align*}
	\vert D^1_\theta m_\theta(y,l,r) \vert& \leq \chi_D(y,l,r)  \left \{ \frac{ \vert \dot{\alpha}_\theta \vert }{\alpha_\theta} + \vert y \vert + K'(\theta) \vert l \vert  + \frac{1}{\theta}  \vert (1-l)(1-r)\vert \right \} \\
	& \leq \chi_D(y,l,r) \left \{ \frac{ \vert \dot{\alpha}_\theta \vert }{\alpha_\theta} + s + K'(\theta) + \frac{1}{\theta} \right \},
\end{align*}
where the second inequality results from bounds by $D$ for $y$, $l$ and $r$. The expression in brackets again is continuous in $\theta$, hence it attains a maximum $C < \infty$ on the compact interval $\Theta$. Now it results from the mean value theorem with $\dot{m}(y,l,r):= \chi_D(y,l,r) C$ for a $\overline \theta \in (\theta_1, \theta_2)$
\[\left | \frac{m_{\theta_1}(y,l,r) - m_{\theta_2}(y,l,r) }{\theta_1 - \theta_2}\right | = |D^1_\theta m_{\overline\theta}(y,l,r)| \leq \dot{m}(y,l,r) \]
and hence the inequality. The measurability of $\dot{m}(y,l,r)= \chi_D(y,l,r) C$ is obvious and also $\E_{\theta_0} \dot{m}(Y, L, R)^2 = \E_{\theta_0}\chi_D(Y, L, R) C^2 =   \alpha_{\theta_0} C^2 < \infty$. 

\qed

%\newpage 

%%%%%%%%%%%%%%%%%%%%%%%%%%%%%%%%%%%%%%%%%%%%%%%%%%%%%%%%%%%%%%%%%%%%%%%%%%%%%%%%%%%%%%%%%%%%%%%%%%%%%%%%%%%%%%%%%

\subsection{Proof of Formulae \ref{elelr}} 

The expectations use the definitions of $L$ and $R$ from Section \ref{mal}.
\begin{align*}
	\E_{\theta} \chi_D(Y, L, R) L  & = \alpha_{\theta} \E_{\theta} [L  \vert  (Y, L, R) \in D] 
	=  \alpha_{\theta}  \E_{\theta}  [\chi_{\{T > 0\}} \vert  (X,T) \in D_0 ] \\
	& =  \alpha_{\theta}   (1- \p_{\theta} \big[T \leq 0  \vert  (X,T) \in D_0 ] )
\end{align*}
The event $\{T \leq 0\}$ now implies, due to the shape of $D_0$ and according to Assumption \ref{A4:Fiete}, already  $\{(X,T) \in D_0\}$ and due to the uniform distribution of $T$ over $[-s, G-s]$ (see Assumption \ref{A2:Fiete}) follows	$\E_{\theta} \chi_D(Y, L, R)  L   =\alpha_{\theta}  (1- \p_{\theta}[T \leq 0 ] / \alpha_{\theta} ) = \alpha_{\theta} - s/G$.
Analogous statements hold for the second expectation and with the CDF $F^{\text{Exp}}(x \vert \theta)= 1-e^{-\theta x}$ of $X$ (see again Assumption \ref{A2:Fiete}) it is 
\begin{multline*}
	\E_{\theta} \chi_D(Y, L, R) ( 1-L)(1-R) = \alpha_{\theta}  \E_{\theta}   [\chi_{\{T \leq 0\}} \chi_{\{T+s \geq X\}}  \vert (X,T) \in D_0 ] \\
	=  \alpha_{\theta}  \p_{\theta}[T \leq 0, \, T +s \geq X] / \alpha_{\theta} 
	= \frac{1}{G} \int_{-s}^0  \{ 1- e^{-\theta (u+s)} \} \diff u 
	= \frac{s}{G} - \frac{1}{G \theta}  (1-e^{-\theta s} ).
\end{multline*}

\qed

%\newpage

%%%%%%%%%%%%%%%%%%%%%%%%%%%%%%%%%%%%%%%%%%%%%%%%%%%%%%%%%%%%%%%%%%%%%%%%%%%%%%%%%%%%%%%%%%%%%%%%%%%%%%%%%%%%%%%%%

\subsection{Proof of Inequality \ref{lemdelta}}

First of all note that $\Delta= \alpha_\theta H_2(\theta) - \frac{1}{G} (1-e^{-\theta s}  ) H_3(\theta)$. Define 
$a := s/(1-e^{-\theta s})$, $b:= \theta^{-1}$ and $c:= e^{-\theta s}$. Then $H_k(\theta)= a^k \cdot c -b^k$ (for $k=1,2,3$) and it is $H_1(\theta)(a+b) = (ac -b)(a+b) = a^2c-b^2 + ab(c-1) = H_2(\theta) - ab(1-e^{-\theta s})$ so that $H_2(\theta)  = H_1(\theta) (a+b) +  ab (1-e^{-\theta s})$ and $H_2(\theta)(a+b)  = (a^2c-b^2)(a+b)= a^3c - b^3 + ab(ac-b) = H_3(\theta) + ab H_1(\theta)$
\begin{align*} 
	\Rightarrow \quad H_3(\theta)&= H_2(\theta)(a+b) - ab H_1(\theta) \\
&	= H_1(\theta) (a+b)^2 + ab(a+b) (1-e^{-\theta s}) - ab H_1(\theta) \\
	& = H_1(\theta) (a^2 + ab + b^2)+ ab(a+b) (1-e^{-\theta s}).
\end{align*}
Furthermore we have
\begin{equation*}
	\alpha_\theta   =  \frac{s}{G} + \frac{s}{G} \frac{1}{ \theta} \frac{(1-e^{-\theta s})}{s}(1-e^{-\theta (G- s)}) 
	=\frac{s}{G}  \left ( 1+b \cdot \frac{1}{a}  (1-e^{-\theta (G- s)}) \right )
\end{equation*}
and $ (1-e^{-\theta s})/G  =s/(Ga)$.
Therefore 
\begin{align*}
	\alpha_\theta H_2(\theta) & = \frac{s}{G}  ( 1+b  a^{-1} \cdot (1-e^{-\theta (G- s)}))   ( H_1(\theta)\cdot (a+b) +  ab  (1-e^{-\theta s})  ) \\
	& = \frac{s}{G}  ( H_1(\theta) (a+b) +  ab  (1-e^{-\theta s})  ) \\
	& \qquad \qquad \qquad+ \frac{s}{G}   (H_1(\theta) (b+b^2/a) +  b^2  (1-e^{-\theta s})  ) (1-e^{-\theta (G-s)})\\
	& = \frac{s}{G}   ( H_1(\theta)  (a+2b+b^2/a) +  (ab+b^2)  (1-e^{-\theta s}) ) \\
	& \qquad \qquad \qquad- \frac{s}{G}   (H_1(\theta) (b+b^2/a) +  b^2   (1-e^{-\theta s}) ) e^{-\theta (G-s)}
\end{align*}
and 
\begin{align*}
	\frac{1}{G} (1-e^{-\theta s}  ) H_3(\theta) & =  \frac{s}{Ga}     ( H_1(\theta) (a^2 + ab + b^2)+ ab(a+b)  (1-e^{-\theta s}) ) \\
	& =  \frac{s}{G}   ( H_1(\theta) (a + b + b^2/a)+ (ab+b^2) (1-e^{-\theta s})).
\end{align*}
The difference of the last two expressions is
\begin{align*}
	\Delta & = \alpha_\theta H_2(\theta) - \frac{1}{G} (1-e^{-\theta s}  ) H_3(\theta) \\
	& =  \frac{s}{G}   ( H_1(\theta)  b )- \frac{s}{G}   (H_1(\theta)(b+b^2/a) +  b^2  (1-e^{-\theta s}) ) e^{-\theta (G-s)} \\
	& =  \frac{s}{G}  ( H_1(\theta)  b )(1-e^{-\theta (G-s)})-  \frac{s}{G}   (H_1(\theta) b^2/a +  b^2   (1-e^{-\theta s})  ) e^{-\theta (G-s)} \\
	& =  \frac{s}{G}  b  H_1(\theta)(1-e^{-\theta (G-s)})-  \frac{s}{G}  b^2   (H_1(\theta)/a +  (1-e^{-\theta s}) ) e^{-\theta (G-s)} \\
	& =  \frac{s}{G\theta} H_1(\theta)(1-e^{-\theta (G-s)})-  \frac{s}{G\theta^2}   (H_1(\theta)/s +  1    ) (1-e^{-\theta s}) e^{-\theta (G-s)},
\end{align*}
where $b= 1/ \theta$ und $1/a = (1-e^{-\theta s})/s$ enter in the last equation. This is the expression $\Delta$ on the left side of inequality \eqref{lemdelta}. In order to derive the negativity, note that due to $\Theta = [\eps, 1/ \eps]$ for $\eps >0$ and $0<s< G$ all factors that do not include $H_1(\theta)$ are strictly positive. We show now $H_1(\theta) < 0$ and $H_1(\theta)/s + 1 > 0$ for all $\theta \in \Theta$. It is known that $e^x \geq x+1$ for all $x\in \R$ with equality only in $x=0$ so that due to $s>0$ and $\theta s>0$:
\[e^{\theta s} > \theta s + 1 \quad \Leftrightarrow \quad  \frac{1}{e^{\theta s}- 1} < \frac{1}{\theta s} \quad \Leftrightarrow \quad H_1(\theta) = \frac{s e^{-\theta s}}{1-e^{-\theta s}} - \frac{1}{\theta} < 0.\]
For $H_1(\theta)/s + 1 > 0$ consider  
\begin{multline*}  \frac{1}{s} H_1(\theta) + 1  =  \frac{ e^{-\theta s}}{1-e^{-\theta s}} - \frac{1}{\theta s} + 1  = \frac{ 1 + e^{\theta s} - 1}{e^{\theta s}-1} - \frac{1 }{\theta s} \\  = \frac{\theta s e^{\theta s} - e^{\theta s} + 1}{(e^{\theta s}-1)\theta s} =  \frac{(\theta s - 1) e^{\theta s} + 1}{(e^{\theta s}-1)\theta s}. 
\end{multline*}
The denominator of the ratio on the right hand side is positive over $\Theta$. The numerator has a root in $\theta = 0$ and is strictly increasing, so that it is only positive over $\Theta$ and this inequality is true. Finally follows as stated $\Delta < 0$.

\qed

\end{document}